\newcommand{\new}[1]{{\color{Black} #1}}
\newcommand{\newTwo}[1]{{\color{Black} #1}}
\newcommand{\newThree}[1]{{\color{Black} #1}}
\newtheorem{lemma}{Lemma}
\newtheorem{theorem}{Theorem}
\newtheorem{corollary}{Corollary}
\newtheorem{assumption}{Assumption}
\newtheorem{proof}{{\bf \sc Proof}}
\newcommand{\diff}{\mathrm{d}}
\DeclareMathOperator*{\E}{\mathbb{E}}
\newcommand{\prob}[1]{\mathbb{P}\left [ #1 \right ]}
\newcommand{\probSymbol}{\mathbb{P}}
\DeclareMathOperator*{\argmax}{arg\,max}
\newcommand{\degree}{m}
\newcommand{\threshold}{\tau}
\newcommand{\fracFunctional}{F}
\newcommand{\functionalState}{\texttt{F}}
\newcommand{\dysfunctionalState}{\texttt{D}}
\newcommand{\actualFracFunctional}{\mathcal{F}}
\newcommand{\numSuccesses}{S}
\newcommand{\chanceSuccess}[3]{P \left [ #1, #2, #3 \right ]}
\newcommand{\chanceSuccessSymbol}{P}
\newcommand{\slowTime}{T}
\newcommand{\utility}[5]{\mathcal{U} \left [ #1, #2, #3; #4, #5 \right ] }
\newcommand{\utilitySymbol}{\mathcal{U}}
\newcommand {\naturalNum} {\mathbb{N}_0}
\newcommand{\shift}{s}
\newcommand{\numNodes}{N}
\newcommand{\numNodesFunctional}{n}
\newcommand{\decay}{\epsilon}
\newcommand{\stickiness}{r}
\newcommand{\demanded}{u}
\begin{document}

\title{Contagious disruptions and complexity traps in economic development}

\author[a,b,*]{Charles D.\ Brummitt}
\author[c,d]{Kenan Huremovic}
\author[e]{Paolo Pin}
\author[b,f]{Matthew H.\ Bonds}
\author[e]{Fernando Vega-Redondo}
\affil[a]{\small Center for the Management of Systemic Risk, Columbia University, New York, NY, 10027, USA}
\affil[b]{Department of Global Health and Social Medicine, Harvard Medical School, Boston, MA 02115, USA}
\affil[c]{IMT School for Advanced Studies, Piazza S. Francesco, 19, 55100, Lucca, Italy}
\affil[d]{Aix-Marseille School of Economics, Aix-Marseille University, 5 Bd Maurice Bourdet, 13001 Marseille, France}
\affil[e]{Department of Decision Sciences, Innocenzo Gasparini Institute for Economic Research, and Bocconi Institute for Data Science and Analytics, Universit\`{a} Bocconi, Via Roentgen 1, Milano 20136, Italy}
\affil[f]{Earth System Science, Stanford University, Stanford, CA 94305, USA}
\affil[*]{Corresponding author: \href{mailto:charles_brummitt@hms.harvard.edu}{charles\_brummitt@hms.harvard.edu}}

\twocolumn[
\begin{@twocolumnfalse}
\maketitle
\begin{abstract}
Poor economies not only produce less; they typically produce things that involve fewer inputs and fewer intermediate steps. Yet the supply chains of poor countries face more frequent disruptions---delivery failures, faulty parts, delays, power outages, theft, government failures---that systematically thwart the production process. To understand \newThree{how these disruptions affect} 
economic development, we model an evolving input--output network in which disruptions spread contagiously among optimizing agents. The key finding is that a poverty trap can emerge: agents adapt to frequent disruptions by producing simpler, less valuable goods, yet disruptions persist. Growing out of poverty requires that agents invest in buffers to disruptions. These buffers rise and then fall as the economy produces more complex goods, a prediction consistent with global patterns of input inventories. Large jumps in economic complexity can backfire. \new{This result suggests why ``big push'' policies can fail, and it underscores the importance of reliability and of gradual increases in technological complexity.}
\vspace{.5cm}
\end{abstract}
\end{@twocolumnfalse}
]

Producing valuable goods and services is a complex, intricate process. One obtains inputs from a multitude of suppliers who must honour their contracts and deliver those inputs without them breaking, spoiling, or being stolen.  These inputs must be stored safely and manipulated in interdependent stages, using labour from workers who may fall ill or shirk their duties, together with complex equipment and vast infrastructure that may malfunction. These complex interdependencies underlie specialization and trade that are the foundation of economic growth and material progress\ \cite{schumpeter1934theory,romer1987growth}. 

Yet this progress, and the disruptions that thwart it, are unevenly distributed around the world. In low-income countries, disruptions can be frequent, long-lasting, and severe. They include power outages\ \cite{Allcott2014,EnterpriseSurvey}, worker absenteeism\ \cite{Chaudhury2006}, failed deliveries of products, water shortages, customs delays, damage from natural disasters, and epidemic disease (Fig.~\ref{fig:disruptions}). Poor countries also tend to produce simpler goods, especially primary resources like timber, mining, and subsistence agriculture\ \cite{Caselli2005,Hidalgo2007,Hidalgo2009}. 

\begin{figure}[htbp]
\begin{center}
\includegraphics[width=1.0\columnwidth]{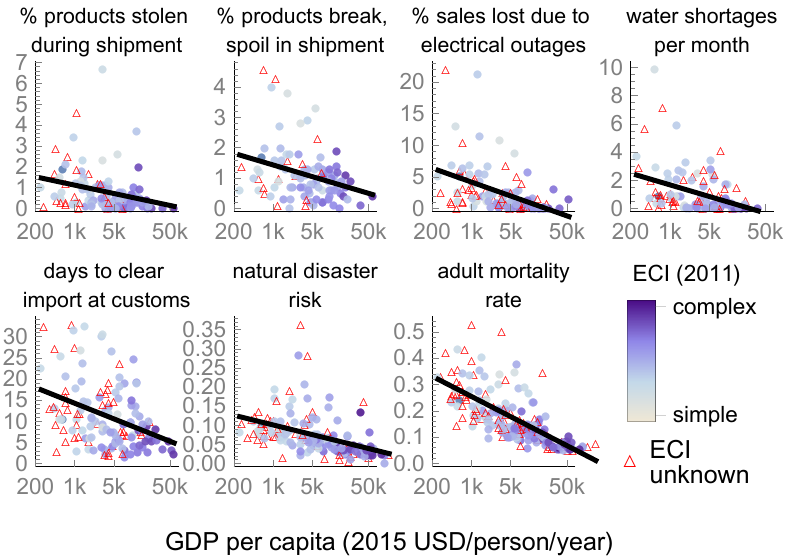}
\caption{Disruptions to the production process tend to be more frequent in poorer, less complex economies. 
The color of each dot indicates the country's Economic Complexity Index (ECI)\ \cite{ECI2011,AtlasEconomicComplexity};
\newThree{a red triangle is drawn is ECI is missing.} Black lines are least-squares fits, with per-capita incomes\ \cite{GDPperCapitaWorldBank} on a logarithmic scale. Natural disaster risk combines exposure and ability to cope\ \cite{NaturalDisasterRiskIndex}. Adult mortality rate is the chance that a $15$-year old dies before age $60$\ \cite{AdultMortalityRate}.
\newThree{Data in the first five plots are from\ \cite{EnterpriseSurvey}.}
}
\label{fig:disruptions}
\end{center}
\end{figure}

In middle- and high-income countries, by contrast, inputs tend to be more reliable, and goods produced tend to be more complex. But rich economies are not immune to disruptions: competition drives firms to build lean supply chains with buffers so small that disruptions can cascade around the globe, causing large aggregate losses\ \cite{Fortune2011,Punter2013}. 

Might the mechanisms causing globalised supply chains to become fragile also be preventing poor economies from becoming more complex and global? This question stretches the limits of our understanding of economic growth and complexity. Input--output linkages among firms---wherein one firm's output is another firm's input---are known to have large, nonlinear effects on economies. In theoretical models, these linkages propagate changes in productivity~\cite{Ciccone2002,Jones2011,Jones2011Misallocation,Acemoglu2012}, disruptions~\cite{Kremer1993}, and bankruptcies~\cite{Battiston2007,Weisbuch2007,Mizgier2012,Henriet2012,Levine2012,Contreras2014}. Empirical research on industrialised economies finds that supply-chain disruptions often lead to lower stock prices and sales growth\ \cite{Hendricks2003,Hendricks2005,Barrot2014,Wang2015}. 
\new{These disruptions, and the uncertainty that they entail, affect development: they cause firms to use less capital\ \cite{Kremer1993}, to misallocate inputs\ \cite{Jones2011,Jones2011Misallocation}, or to form shorter supply chains\ \cite{Levine2012}, generally hindering the economy to industrialise\ \cite{Ciccone2002} and limiting the effectiveness of policy\ \cite[Chapter 4]{Aoki2006_Reconstructing_book}.}
However, in these models, disruptions are treated as exogenous, and firms interact once in a static network. \new{These assumptions preclude the dynamic feedbacks that} can generate complex outcomes such as poverty traps and periodic cycles. 

\new{Modelling dynamic production networks is a challenging problem, involving heterogeneous input--output patterns and input elasticities\ \cite{Carvalho2014_Perspective}.}  
Recent models consider firms that endogenously form input--output linkages\ \cite{Duan:2011fy,Oberfield2012,Carvalho2014}; others consider firms deciding how to source their inputs in a risky supply chain with one\ \cite{Tomlin2006,Aydin2011} or more\ \cite{Bimpikis:2014vt,Ang:2014hd,Bakshi:2015wl} tiers. Missing is an understanding of how fast dynamics in economic networks, such as disruptions in supply chains, affect their long-run evolution and their growth in complexity.

We aim to fill this theoretical gap by introducing a simple model that captures complex dynamics of  disruptions spreading in an evolving input--output network. The main result is that poverty can emerge and reinforce itself: 
facing an unreliable environment of potential inputs, agents choose simple production processes that require few inputs, but disruptions remain frequent. Escaping this trap requires investing in buffers against disruption, such as arranging for extra suppliers or storing inventories of inputs. We find empirical support for the prediction that these buffers grow and then shrink as economies develop. When they shrink too much, disruptions can spike in number, as occurs in lean supply chains today. This mechanism also imperils developing economies: jumping abruptly to a more complex technology can backfire by causing greater dysfunction, suggesting that ``big push'' policies\ \cite{Murphy1989} may benefit from technological gradualism. We suggest that this alternative perspective---focused on contagion in evolving supply chains---may shed light on why poor economies are not catching up \newThree{and why some interventions fail.}


\section*{\newThree{Model of contagious disruption in an evolving input--output network}}
\label{sec:model}

We consider a large population of agents who represent entrepreneurs or firms producing goods and services that require inputs from other agents. The model framework is meant to correspond to a variety of situations that broadly represent the process of coordinating inputs and outputs for economic production: launching a business requires intermediate goods from suppliers; coordinating stakeholder meetings requires a quorum of attendees; repairing equipment requires parts supplied by others; and so on. 

\new{Our focus on fragility has antecedents in Kremer's ``O-Ring theory'' of economic development, in which a single mistake---like the malfunctioning O-Ring that triggered the explosion of the \emph{Challenger} Space Shuttle---can doom a sequence of interrelated tasks\ \cite{Kremer1993}. That study shows how fragility can lead to highly skilled workers matching with each other. Here, we focus on how people respond by using simpler technology or by investing in buffers against disruption so that some failures can be endured.}

\subsection*{Balls-and-urn model of production and contagious dysfunction}
\label{sec:balls_urn}
At each time $t$, all agents exist in one of two states. A fraction $\fracFunctional(t)$ of agents are \emph{functional}: they recently succeeded in producing and can provide inputs to others upon request. The remaining fraction $1 - \fracFunctional(t)$ are \emph{dysfunctional}: they recently failed to produce and cannot provide inputs to others. 

Agents become functional and dysfunctional as they succeed and fail, respectively, in producing goods or accomplishing tasks. Each agent attempts to produce a good requiring $\threshold$ many inputs. Attempts at producing a good occur randomly at a constant rate (as a Markov process). We do not track types of inputs nor economic sectors. This simplification allows us to abstract from which pairs of inputs are substitutes by using a simple threshold rule: an agent attempts to obtain inputs from $\degree$ agents in the population, and she succeeds in producing if and only if at least $\threshold$ of those $\degree$ many inputs are successfully produced and delivered to her (see Fig.\ \ref{fig:illustrate_model}). We call $\degree$ the number of \emph{attempted inputs}\newTwo{, and we think of it as} the \emph{in-degree} when viewing these interactions as an input--output network. 

\newThree{This threshold rule captures the idea that some inputs are critical: without them, production halts or fails. 
For example, the March 11, 2011 earthquake near Japan closed the Hitachi factory that produced most of the world's airflow sensors, a critical input for cars. As a result, automobile factories on the other side of the globe had to curb production or close~\cite{Punter2013}. In the developing world, drip irrigation has failed in Sub-Saharan Africa due to disruptions in water infrastructure and scarce knowledge for repair~\cite{Garb2014}; adulterated fertilizer sold in Ugandan markets yields negative average returns~\cite{Bold2015}; Internet-connected kiosks in India fell into disuse because of unreliable electricity and insufficient service from operators~\cite{InternetKiosksIndia}.}

\begin{figure}[htbp]
\begin{center}
\includegraphics[width=.6 \columnwidth]{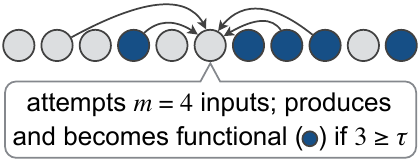}
\caption{Illustration of the model. Agents \new{(drawn as disks)} are people or firms who are either functional or dysfunctional at any moment in time. \emph{Functional} means that the agent has enough inputs needed to produce or to accomplish a task, and that other agents can rely on this agent for inputs. \new{Agents attempt} to produce a good (or to have a meeting with other people, etc.)\ using $\degree$ inputs drawn randomly from the population \new{(indicated by arrows)}, and \new{they succeed} if at least $\threshold$ of those suppliers are functional.}
\label{fig:illustrate_model}
\end{center}
\end{figure}

\newThree{The simplifications above}
allow us to describe an evolving input--output network, together with disruptions spreading on it, using a single differential equation for the expected fraction of functional agents $\fracFunctional(t)$: 
\begin{align}
\diff \fracFunctional / \diff t &:= 
\prob{\text{Binomial}\!\left (\degree, \fracFunctional(t) \right ) \geq \threshold} - \fracFunctional(t)
\label{eq:ODE_same_rate}
\end{align} 
for $t \geq 0$ and integers $\degree \geq 0$ and $\threshold > 0$. (If $\threshold = 0$, then there is little to model, so we let $\diff \fracFunctional / \diff t := 0$.) This framework is a balls-and-urn model~\cite{UrnBook,Woodbury1949} taken to an infinite-population, continuous-time limit, so that transition probabilities become deterministic rates of change in the mean-field master equation\ \eqref{eq:ODE_same_rate}\ \cite{MasterEquations}. \newThree{(Master equations are used to describe the dynamics of microeconomic actors in social science\ \cite{Helbing2010} and economics\ \cite{Aoki1998_NewApproaches_book,Aoki2004_modeling_book}.)}
We derive equation\ \eqref{eq:ODE_same_rate} in 
Supplementary Note 1
and explain it next. 

For simplicity, the input--output network is random and ``annealed'': in each production attempt, inputs are chosen uniformly at random with replacement from the population. This annealed network captures the idea that people do different tasks that require different inputs: an engineer fixes a machine on Monday and leads a meeting on Wednesday; an entrepreneur tries one business idea this year and another idea the next year, requiring different inputs for each step. \new{Later, we relax this assumption.} 

All agents use the same value of $(\degree, \threshold)$ (i.e., they play ``symmetric strategies''). Therefore, the chance of successful production is the probability $\probSymbol$ that a binomial random variable with parameters $\degree$ and $\fracFunctional(t)$ is $\geq \threshold$, where $\threshold$ is the number of critical inputs needed. This threshold rule resembles the essential inputs and critical subtasks in the ``O-Ring theory'' of Kremer\ \cite{Kremer1993}, but here people can have buffers against failures: the number of attempted inputs ($\degree$) can exceed the number of inputs needed ($\threshold$). This threshold rule also appears in models of social contagion and collective behavior\ \cite{Granovetter1978,Watts2002,Jackson2007},  but here we have an annealed network, bidirectional changes in state, and decision making, described later.

\new{Some disruptions may result not from other agents' dysfunction but from other causes, such as fires, insect outbreaks, weather, and so on. To capture these exogenous disruptions, we assume that all agents independently become dysfunctional for exogenous reasons at a small rate $\decay$ (according to a Poisson process). 
This assumption \newTwo{introduces} a $- \decay \fracFunctional(t)$ term to the master equation\ \eqref{eq:ODE_same_rate}:}
\begin{subequations}
\begin{align}
\diff F / \diff t
 &= \left [ 1 - \fracFunctional(t) \right ] \chanceSuccessSymbol - \fracFunctional(t) (1-\chanceSuccessSymbol + \decay)  \label{eq:ODE_definition_unsimplified} \\
 &= \chanceSuccessSymbol - \fracFunctional(t) (1 + \decay),\label{eq:ODE_definition} 
\end{align}
\label{eq:ODE_definition_subequations}
\end{subequations}
\!\!\! where $\chanceSuccessSymbol$ is the probability that an agent successfully produces, $\prob{\text{Binomial}\!\left (\degree, \fracFunctional(t) \right ) \geq \threshold}$. 
The first term in equation\ \eqref{eq:ODE_definition_unsimplified} is the rate $1-\fracFunctional(t)$ at which dysfunctional agents attempt to produce; each attempt succeeds with probability $\chanceSuccessSymbol$; if the attempt succeeds, then $\fracFunctional(t)$ rises; otherwise $\fracFunctional(t)$ stays the same, and vice versa for the second term. Equation \eqref{eq:ODE_definition_subequations} is derived in 
Supplementary Note 1,
and it recovers equation\ \eqref{eq:ODE_same_rate} \new{with $\epsilon := 0$}. 

The initial amount of dysfunction $1-\fracFunctional(0)$ is exogenous; after that, disruptions are entirely endogenous, spreading from supplier to customer. Driving this contagion is the assumption that an agent delivers an input upon request if and only if she successfully produced in her most recent attempt to produce. For example, a Ugandan farmer who discovers that her seeds were inauthentic\ \cite{Bold2015}; an Ethiopian farmer whose drip irrigation system fails because of upstream failures\ \cite{Garb2014}; or an automobile manufacturer who failed to produce due to missing parts\ \cite{Fortune2011,Punter2013} all may subsequently fail to deliver output promised to a customer.

\subsection*{Deciding on complexity $\threshold$ and on buffers against disruption $\degree - \threshold$}

The threshold $\threshold$ loosely captures the complexity of the good or service being produced: more complex goods require more inputs\ \cite{Ethier1982,Romer1990}. To capture the incentives to create high-value products, we present a simple, reduced-form model in which agents derive utility from successfully producing goods that require more inputs. 
\new{We assume that when an agent successfully produces, her induced utility grows with the complexity of production; for simplicity, we express this utility by $\threshold^\beta$ where $\beta \in (0,1)$. (This assumption, that complexity underlies rising productivity, is standard in economic models\ \cite{romer1987growth,Romer1990,Ethier1982}. For a derivation of $\threshold^\beta$, see
Supplementary Note 2.)
} 
We also assume that each attempted input costs $\alpha > 0$. This parameter $\alpha$ represents the marginal cost of finding suppliers, maintaining multiple suppliers for the same input\ \cite{ProvitiReport}, incentivising suppliers to have multiple manufacturing sites\ \cite{SupplyChainDisruptionsHBR2014}, or maintaining inventory of inputs\ \cite{Tang2006}\newThree{; for details, 
see Supplementary Note 3}.

For simplicity, we assume that each agent knows the current likelihood $\fracFunctional(t)$ that a uniformly-random supplier would successfully produce and deliver an input upon request. 
Based on that reliability $\fracFunctional(t)$, agents revise their strategy of how complex a product to produce ($\threshold \in \{0, 1, 2, \dots\}$) and how many inputs $\degree \in \{0, 1, 2, \dots\}$ to attempt to procure in order to produce that good. For instance, if suppliers are unreliable [i.e., $\fracFunctional(t)$ is small], then agents arrange for redundant inputs (i.e., $\degree - \threshold > 0$) provided that they can afford it. Agents must commit to a certain technology and production technique for a certain amount of time $\slowTime$, so we assume that every $\slowTime$ amount of time all agents simultaneously update their strategy to the ``best response'', the maximiser $(\degree^*, \threshold^*)$ of the utility function 
\begin{align}
\utility{\degree}{\threshold}{\fracFunctional(t)}{\alpha}{\beta}
:= \chanceSuccess{\degree}{\threshold}{\fracFunctional(t)} \threshold^\beta - \alpha \degree. \label{eq:utility}
\end{align}
Thus, agents' strategies at time $t$ are 
\begin{align}
\left (\degree^*, \threshold^* \right ) = \argmax_{\degree, \threshold \geq 0} \,\, \utility{\degree}{\threshold}{\fracFunctional(k T)}{\alpha}{\beta} 
\label{eq:best_response}
\end{align}
for $t \in [k \slowTime, (k+1)\slowTime)$ where $k \in \{0,1,2,\dots\}$. 
Together, equations\ \eqref{eq:ODE_definition_subequations}--\eqref{eq:best_response} and the initial $\fracFunctional(0)$ define the model.

We have abstracted from considerations about market equilibrium and price formation\new{, hence expressing all payoff magnitudes in terms of some fixed numeraire.} Only $\threshold$ goods are used in production, even if more than $\threshold$ of $\degree$ suppliers are functional, because unused inputs are assumed to be perfect substitutes for used ones. 
\newTwo{(In 
Supplementary Note 2.2,
we explain three alternative interpretations of the relationship between inputs and outputs in the production process.)}


\section*{Results}

Figure~\ref{fig:example_trajectory} illustrates the three phases of an economy in this model: trapped, emerging, and rich. To understand the figure, suppose that at time $t=0$ agents successfully produce and deliver an input only $50\%$ of the time [i.e., $\fracFunctional(0) = 50\%$]. Then, from equation\ \eqref{eq:utility}, agents choose the strategy $(\degree^*, \threshold^*) = (3,1)$, meaning that agents produce a good requiring $\threshold^* = 1$ input, but they arrange for $\degree^* - \threshold^* = 2$ extra suppliers because disruptions are common [$1 - \fracFunctional(0) = 50\%$]. Using this strategy in an economy with reliability $\fracFunctional(0) = 0.5$ causes disruptions to become less frequent ($\diff \fracFunctional / \diff t > 0$), indicated by the green curve marked ``$3,1$'' in Fig.~\ref{fig:example_trajectory}. 

\begin{figure}[htb]
\begin{center}
\includegraphics[width=\columnwidth]{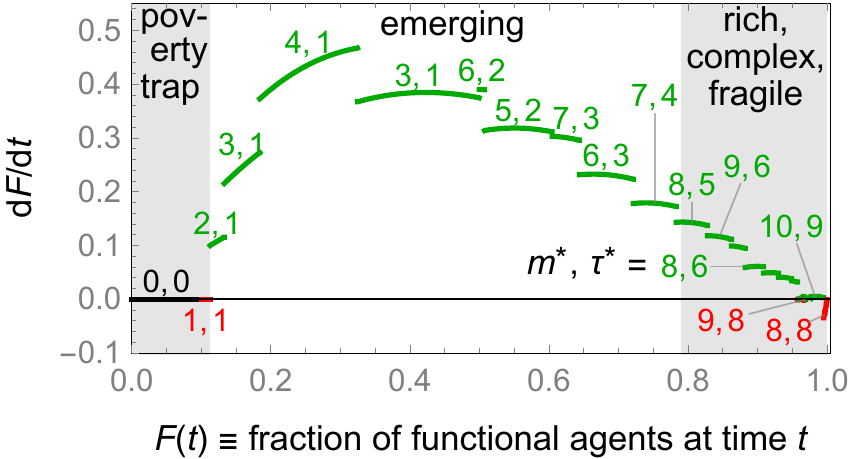}
\caption{
Representative phase portrait, showing the three phases of a model economy. Here, $(\alpha, \beta, \decay) := (0.1, 0.4, 0.001)$. 
The black, green, and red curves are the ODE\ \eqref{eq:ODE_definition_subequations} 
with labels indicating the best response $(\degree^*, \threshold^*)$ and colours denoting the sign of  $\diff \fracFunctional/ \diff t$. 
\newThree{Best responses are computed numerically using the method explained in 
Supplementary Note 4.}
}
\label{fig:example_trajectory}
\end{center}
\end{figure}

Figure\ \ref{fig:example_trajectory} corresponds to an economy in which agents best respond arbitrarily quickly based on the reliability $\fracFunctional(t)$ of their fellow agents; that is, the best-response timescale $\slowTime$ is arbitrarily close to $0$. This $\slowTime \to 0$ limit is more analytically tractable because $\diff \fracFunctional / \diff t$ changes discontinuously wherever the best response $(\degree^*, \threshold^*)$ changes as a function of $\fracFunctional(t)$. We relax this assumption later when we discuss \newTwo{fragility in rich economies}.

Next we explain the economy's three main phases and a pitfall in reaching the ``industrialised'' phase. 

\subsubsection*{Poverty trap with simple technology and frequent disruptions}
In an economy with frequent disruptions [$\fracFunctional(t)$ near zero], agents choose to withdraw from the economy by not relying on any inputs from others ($\degree^* = \threshold^* = 0$). This strategy resembles subsistence agriculture, hunting, and pastoralism. Such an economy is in steady state: $\diff \fracFunctional / \diff t$ \newTwo{ is equal to} $0$ from equation\ \eqref{eq:ODE_definition_subequations}, and no agent wants to deviate from the strategy $(0,0)$. 

\new{This steady state also has a basin of attraction (marked \lq\lq poverty trap\rq\rq in Fig.~\ref{fig:example_trajectory}), provided that there are some exogenous sources of disruption ($\decay > 0$). (\new{Specifically, for $\fracFunctional$ just above $\alpha$, the best response is $(\degree^*, \threshold^*) = (1, 1)$. That strategy means attempting a task that requires $\threshold^* = 1$ input and requesting that one input from $\degree^* = 1$ other agent. This strategy has no redundant inputs. It succeeds in producing with probability $\chanceSuccess{1}{1}{\fracFunctional} = \fracFunctional$, so,} \new{from equation \eqref{eq:ODE_definition_subequations}, $\diff \fracFunctional / \diff t = \fracFunctional -\fracFunctional - \decay < 0$.}})

\subsubsection*{Emerging economies' buffers to disruption rise and then fall}
If an economy is sufficiently reliable then it begins to develop. For instance, in Fig.~\ref{fig:example_trajectory} if $\fracFunctional(t) > 2^{-(\beta +1)} \left(1 - \sqrt{1 - \alpha  2^{\beta +2}}\right) \approx 11\%$  then agents 
choose to produce goods that require some inputs ($\threshold^*>0$). Provided that $\fracFunctional(t)$ is not too close to one (a case described later), the agents also arrange for some extra inputs ($\degree^* > \threshold^*$) in anticipation that some inputs will not be functional. This strategy results in the economy becoming more functional over time ($\diff \fracFunctional / \diff t > 0$) and producing ever more complex goods [$\threshold^*$ rises with $\fracFunctional(t)$]. 

As this economy develops, two features rise and then fall over time: the speed of development $\diff \fracFunctional / \diff t$ and the buffer against disruptions $\degree^* - \threshold^*$. (Later we examine this inverted-U pattern empirically.) This inverted-U reflects the \newTwo{following ideas.} Firms in a very unreliable economy need costly buffers against disruption to produce even simple, low-value goods (\newTwo{such as goods} with complexity $\threshold^* = 1$). The economy barely manages to produce \newTwo{such} simple goods using the small amounts of redundancy afforded by the low earnings (for example, with redundancy $\degree^*- \threshold^* = 2$). When the economy is more reliable [higher $\fracFunctional(t)$], more complex tasks become feasible with large buffers against disruption, such as complexity $\threshold^* = 3$ with buffer $\degree^* - \threshold^* = 4$. Finally, as the economy becomes maximally reliable [$\fracFunctional(t)$ approaches one], agents economise on their costly buffer against disruptions ($\degree^* - \threshold^*$), which leads to new vulnerabilities. 

\subsubsection*{Rich yet \newTwo{fragile}} 

When the economy becomes very reliable [large $\fracFunctional(t)$], agents produce very complex goods requiring many inputs (large $\threshold^*$). Yet this high reliability also induces agents to economise on their buffers to disruptions. In fact, when $\fracFunctional(t)$ is close to $1$, they eliminate their buffer ($\degree^* = \threshold^*$). Then disruptions spread like a virus to which no one is immune: $\fracFunctional(t)$ falls, indicated by the red curves in the bottom-right corner of Fig.~\ref{fig:example_trajectory}, where $\diff \fracFunctional / \diff t < 0$. Falling $\fracFunctional(t)$ means that more and more agents are unable to produce, and the drop in output resembles a recession. Such downturns occur generically in rich, highly functional economies: 
Theorem 1 in Supplementary Note 5 
\newTwo{shows that the state $\fracFunctional = 1$ (a completely functional economy) is unstable to perturbations. 
\newThree{In our model, a rich, highly functional economy is ``fragile'' in the sense that there are large values of $\fracFunctional(t)$ close to $1$ for which $\diff F / \diff t < 0$.} 
This ``rich yet fragile'' phenomenon accords with recent events: firms face pressure to build leaner supply chains, to invest in smaller buffers against disruptions, and to produce ever more complex goods, resulting in occasional cascading disruptions\ \cite{Fortune2011,Punter2013}.
}

\newTwo{
What happens after $\fracFunctional(t)$ begins to fall depends on how quickly agents best respond and whether the best response is discrete. 
}
If agents commit to a strategy for a positive amount of time $\slowTime > 0$, then the economy's reliability $\fracFunctional(t)$ falls until either (i) the economy enters the poverty trap (which occurs only for very large $\slowTime$) or (ii) agents best respond in a way that causes $\fracFunctional(t)$ to begin to rise. 
$\fracFunctional(t)$ can rise because agents produce simpler, lower-value goods (smaller $\threshold$) or because they increase the buffer against disruption (larger $\degree$). 
\newTwo{If $\degree$ and $\threshold$ are discrete (as considered here) and $\slowTime > 0$, then the economy can cycle:} After $\fracFunctional(t)$ rises for a while, agents best respond again, and because their economy is quite reliable they choose to produce very complex goods or to decrease their buffer against disruption, and the process \newTwo{can repeat}. \newTwo{If the decision variables $\degree$ and $\threshold$ were made continuous, or if $\slowTime \to 0$, then economy may settle onto a value of $\fracFunctional$ smaller than one.}

This fragility of rich economies complements theories of ``aggregate fluctuations''\ \cite{Bak1993,Gabaix2011,Acemoglu2012,Burlon2012,Baqaee2015,Carvalho2014_Perspective}. Those theories show how exogenous shocks to firms can result in large changes in the total production in the economy. 
One reason is heterogeneity: some firms and sectors are much larger\ \cite{Gabaix2011} or more connected\ \cite{Acemoglu2012,Burlon2012} than others, so a small shock to these important firms can have large consequences. 
\newThree{The models in\ \cite{Gabaix2011,Acemoglu2012,Burlon2012,Baqaee2015} are static and timeless, whereas our model is inherently dynamic, with most ``shocks'' caused by the endogenous failure of other firms.} 
Another reason \newThree{for aggregate fluctuations} is that firms' inventories self-organize to a critical point\ \cite{Bak1993}. \newThree{In the model in\ \cite{Bak1993}, firms request inputs from suppliers, and these requests spread through a fixed network. Here, firms also request inputs from suppliers, but disruptions (i.e., failures to produce due to insufficient functional inputs) spread contagiously in a network that changes over time.}

\subsubsection*{Overshooting complexity can backfire} 

The core mechanism that causes downturns in the rich economy also makes it difficult for emerging economies to become rich and complex themselves. Specifically, if an economy tries to ``prematurely jump'' to a more complex technology, then it can slide backward and become more dysfunctional. To make this idea precise, suppose that agents do not use the best response $(\degree^*, \threshold^*)$ but instead attempt a more complex strategy that requires $\shift$ more inputs: $(\degree^* + \shift, \threshold^* + \shift)$. The buffer against disruptions remains the same; it is still $\degree^* - \threshold^*$. What is different is that agents try to produce goods that require more inputs, or they try to produce the same good as before but using technology that depends on more inputs, such as drip irrigation instead of traditional irrigation~\cite{Garb2014}. 

\begin{figure}[hbt]
\begin{center}
\includegraphics[width=\columnwidth]{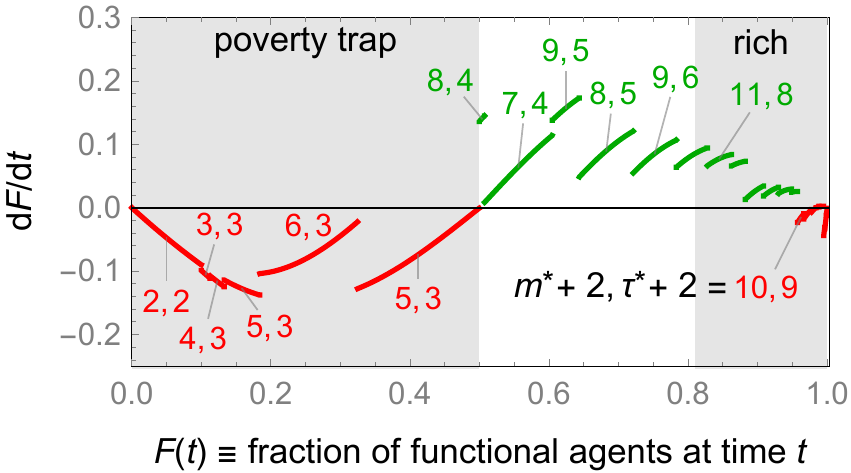}
\caption{
Jumping to a more complex technology can backfire by causing dysfunction to rise, especially for emerging economies. Here, agents use a strategy that requires two more inputs than they would have chosen given the reliability $\fracFunctional(t)$ of their potential inputs: they use the strategy $(\degree^* + 2, \threshold^* + 2)$, where $(\degree^*, \threshold^*)$ is the best response. The parameters are the same as in Fig.~\ref{fig:example_trajectory}.}
\label{fig:overshoot}
\end{center}
\end{figure}

Figure~\ref{fig:overshoot} shows that this strategy $(\degree^* + 2, \threshold^* + 2)$ often results in dysfunction rising over time ($\diff \fracFunctional / \diff t < 0$), indicated by the red curves. In these intervals with $\diff \fracFunctional / \diff t < 0$, agents are ``overshooting'' in complexity: they are attempting a production process more complex than what the surrounding system can support. 
This overshooting echoes failures to adopt complex technologies in developing countries because the technologies depend on myriad inputs prone to disruption, such as drip irrigation systems~\cite{Garb2014} and Internet kiosks~\cite{InternetKiosksIndia}. 

Emerging economies are especially vulnerable to overshooting in complexity: notice in Fig.\ \ref{fig:overshoot} that $\diff \fracFunctional / \diff t < 0$ for many intermediate values of $\fracFunctional(t)$. As a result, the poverty trap in Fig.\ \ref{fig:overshoot} is dramatically larger than when agents use the best response (compare with Fig.~\ref{fig:example_trajectory}). For example, an economy with $\fracFunctional(t)$ near $50\%$ can fall into the poverty trap if it overshoots in complexity for a sufficiently long amount of time. By contrast, a rich economy can typically accommodate a jump in complexity without causing dysfunction to rise: in Fig.~\ref{fig:overshoot} there are many large values of $\fracFunctional(t)$ with $\diff \fracFunctional / \diff t > 0$. 

Comparing Figs.\ \ref{fig:example_trajectory} and\ \ref{fig:overshoot}, we see the benefit of gradual growth in technological complexity. This prescription is at odds with the classic idea of a ``big push'' of simultaneously industrialising many sectors of an economy\ \cite{Murphy1989}: a big push overcomes coordination problems, but it can add fragility by introducing complex technologies that depend on unreliable inputs. This prescription for slow, gradual reform mirrors the suggestions given by a model of trust and social capital\ \cite{Francois2005}.

\subsubsection*{Phase diagram}
To demonstrate that the phenomena in Fig.~\ref{fig:example_trajectory} are rather generic, Fig.~\ref{fig:phase_diagram} shows the sign of $\diff \fracFunctional / \diff t$ and the best response $(\degree^*, \threshold^*)$ for many values of the parameter $\alpha$, the marginal cost of each attempted input. A poverty trap occurs for $\fracFunctional(t) \in [0, \alpha]$; the boundary $\fracFunctional(t) = \alpha$ is the indifference curve between $(\degree, \threshold) = (1,1)$ and $(0,0)$. If the cost to arrange for an input is too high ($\alpha >1/4$ in Fig.~\ref{fig:phase_diagram}), then the only long-run outcome is poverty. Otherwise, there exists a good outcome in the long run in which the economy is complex and highly functional, and buffers against disruptions $\degree^* - \threshold^*$ tend to rise and then fall as the economy approaches this rich state. 

\begin{figure}[htb]
\begin{center}
\includegraphics[width=.85\columnwidth]{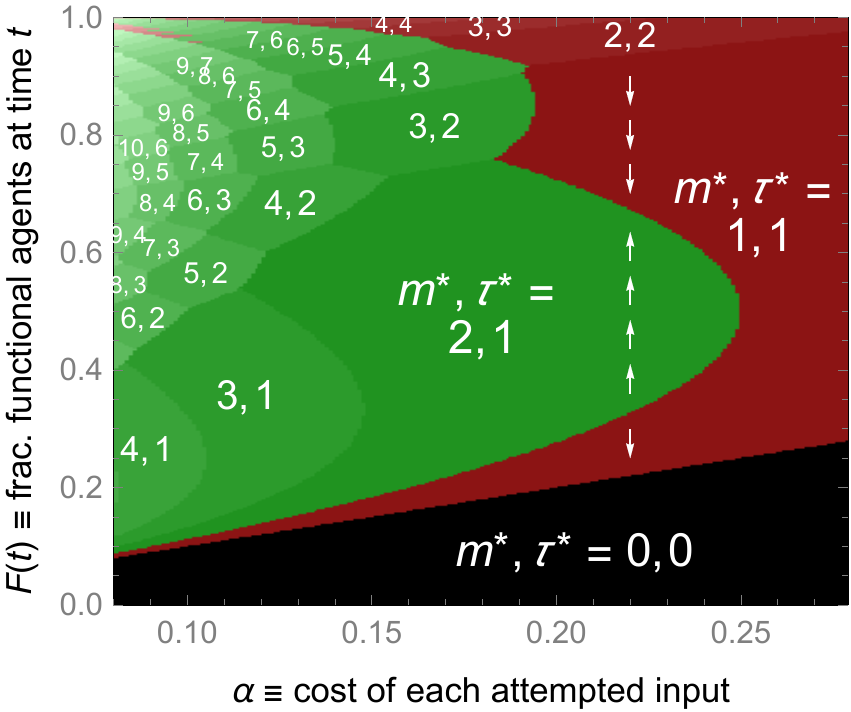}
\caption{Phase diagram of the sign of $\diff \fracFunctional / \diff t$ ($\diff \fracFunctional / \diff t > 0$ is green; $< 0 $ is red; $= 0$ is black) and the best response $(\degree^*, \threshold^*)$ (white text) as a function of $\alpha$ and $\fracFunctional(t)$. The parameter $\alpha$ is assumed to change slowly, if at all. Vertical white arrows illustrate the dynamics $\diff \fracFunctional/\diff t$. Here, as in Fig.~\ref{fig:example_trajectory}, $\beta := 0.4$, and \new{$\decay = 0.001$}, so the strategy $(\degree^*, \threshold^*) = (1,1)$ results in $\diff \fracFunctional/\diff t < 0$. Poor economies get stuck in the region labelled ``0,\,0''. Rich, complex economies settle upon limit cycles near where green lies below red at the top of the diagram.
}
\label{fig:phase_diagram}
\end{center}
\end{figure}

However, there are pitfalls in reaching this rich state. One pitfall is the ``overshooting'' described above. Another is to decrease the cost $\alpha$ of each attempted input. The marginal cost $\alpha$ is exogenous, but it could change if, for example, communications technology makes it easier to arrange alternative suppliers. Decreasing $\alpha$ can trigger an escape from the poverty trap if it puts the economy in the green region in Fig.~\ref{fig:phase_diagram}. But it can also make the economy more dysfunctional: if $\alpha$ is decreased into the red region, where agents choose $\degree^* = \threshold^* = 1$, then dysfunction rises \new{(provided that exogenous failures occur, i.e., $\decay > 0$)}. The intuition is that decreasing $\alpha$ incentivises people to attempt more complex production that uses more inputs (higher $\threshold^*$), which can lead to more failure than success, resulting in more frequent dysfunction in the new steady state [lower $\fracFunctional(t)$]. If policymakers sense this feedback, then they may avoid actions that decrease $\alpha$, keeping the economy stuck in the trap. 

\new{
\subsubsection*{Countervailing effects of keeping functional suppliers and of choosing popular suppliers}

The model presented above is simplified by the assumption that agents choose new suppliers uniformly at random every time they try to produce. At the other extreme, many models of economic cascades assume a rigid input--output network\ \cite{Acemoglu2012,Battiston2007,Weisbuch2007,Henriet2012,Levine2012,Contreras2014}. 

To explore a more realistic middle ground between these extremes, 
in Supplementary Note 6
we modify the choice of suppliers in two ways: agents tend to keep functional suppliers, and they bias their search toward suppliers who already have many customers (i.e., preferential attachment). 
These changes do not affect the qualitative results insights of the model, but they do have two interesting effects that we illustrate using numerical simulations in 
Supplementary Figure 1.

One effect is that the economy is less likely to fall into the trap. It is straightforward that a tendency to retain functional suppliers helps $\fracFunctional(t)$ grow. More interestingly, a tendency to choose popular suppliers also helps $\fracFunctional(t)$ grow: because functional agents tend to accumulate customers, having many customers is correlated with being functional. 

However, these two tendencies can generate fragility. Once the economy is complex and highly functional, it can rely on very few agents who supply almost everyone. When those ``supplier-hubs'' become dysfunctional (because they rely on dysfunctional suppliers or because they were suffered a rate-$\decay$ exogenous failure), then the brittle economy can undergo a severe downturn and cascading disruptions. In summary, what makes an economy more likely to emerge from the pull of poverty is precisely what makes the economy fragile upon becoming complex.
}

\subsection*{Empirical support}
\subsubsection*{Input inventories rise and then fall as economies become more complex}

There is scant data---especially in developing countries---on supply-chain disruptions and on responses to them. 
Relevant data from the World Bank's Enterprise Survey include the number of days of inventory that firms keep of their ``main input'' (i.e., highest-value input)~\cite{EnterpriseSurvey}. Stockpiling inputs is one costly way to mitigate the risk of disruptions in one's supply chain\ \cite{Tomlin2006}, so it loosely corresponds to our model's buffer against disruption $\degree^* - \threshold^*$. Macroeconomic research has focused on inventories of finished goods, but inventories of inputs have drawn increasing attention\ \cite{Humphreys2001,Iacoviello2011}, and some models of input inventories also consider intermediate goods and supply chains\ \cite{Khan2007,Wen2011}. 

\begin{figure}[htb]
\begin{center}
\includegraphics[width=.9\columnwidth]{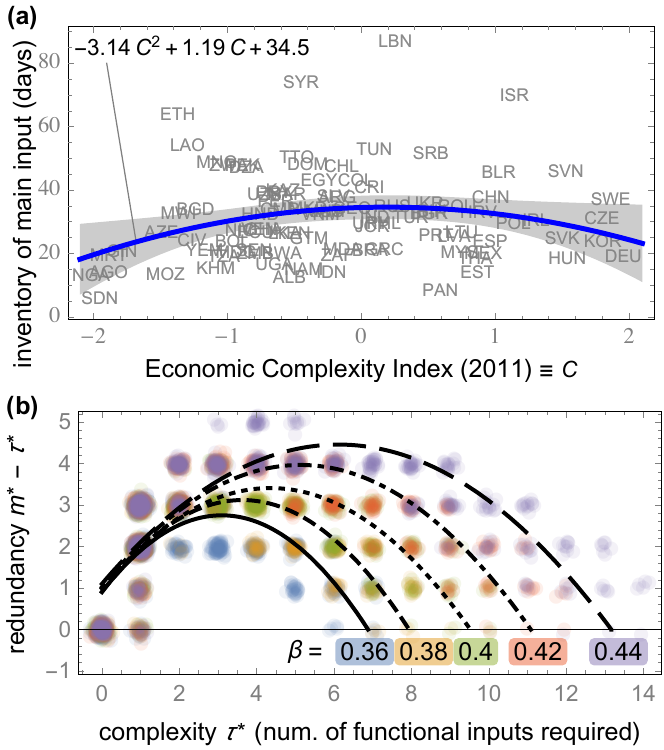}
\caption{Qualitative match between (a) empirical data on input inventories and (b) the model's prediction that buffers to supply-chain disruptions rise and then fall as economies develop. (a) Input inventories of firms, averaged at the country level~\cite{EnterpriseSurvey}, have an inverted-U relationship with the complexity of the economy~\cite{ECI2011,AtlasEconomicComplexity} ($p$-value $0.022$ for the $C^2$ coefficient; $R^2 = 0.063$; $N = 95$ countries; $95\%$ mean prediction band shown in gray). (b) Redundancy versus complexity for $\alpha := 0.1$, $\decay := 0.001$, and five values of $\beta$, each a different colour. The curves show least-squares fits to $\delta_0 + \delta_1 \threshold^* + \delta_2 \threshold^{*2}$. The data is dispersed by $\mathcal{N}(\mathbf{0},0.008 \times \mathbf{1})$ to indicate density.}
\label{fig:inventory}
\end{center}
\end{figure}

Recall from Fig.~\ref{fig:example_trajectory} that our model predicts that buffers against disruptions to inputs ($\degree^* - \threshold^*$) tend to rise and then fall as economies develop. 
To test this qualitatively, we plot  in Figure~\ref{fig:inventory}(a) 
the input inventory of firms averaged at the country level, for $95$ countries for which we have an estimate of the complexity of the economy~\cite{ECI2011,AtlasEconomicComplexity}. This Economic Complexity Index is calculated from the bipartite network of countries and of the products that they export~\cite{Hidalgo2009}. 
We find that input inventory has a statistically significant inverted-U relationship with the complexity $C$ of an economy's production. [A least-squares fit of inventory with $\gamma_0 + \gamma_1 C + \gamma_2 C^2$ has $\hat \gamma_2 = -3.14$, with $p\text{-value} = 0.02$; see Fig.~\ref{fig:inventory}(a).] 
This relationship qualitatively matches the inverted-U exhibited by the model [Fig.~\ref{fig:inventory}(b)]. 

\new{
\subsubsection*{Case study: Drip irrigation}

The reported uneven success of {drip irrigation} in different countries\ \cite{Garb2014} illustrates the assumptions and messages of this model. 
Drip irrigation applies water directly to roots at a small, consistent rate, which increases efficiency and 
transforms land from arid to arable. The technology is delicate and complex because it depends on a broad system of inputs: it requires high-quality water to be delivered at the right pressure, in pipes and tubes that match the local soil, crop and weather. Its equipment needs expert advice and repair. 
There is little buffer against malfunction because crops fail quickly in dry soil if the water's flow is interrupted. In our model, drip irrigation resembles a high threshold (high $\threshold$) technology (compared to rainwater), for which there is little buffer against disruptions ($\degree - \threshold$). 

Drip irrigation has been ``spectacularly successful'' in Israel, but ``the very same hardware often turned out to be completely useless in the sub-Saharan African context''\ \cite[p.\ 14]{Garb2014}. 
Farmers in Israel enjoy an ``extensive infrastructural network so pervasive and successful as to be nigh invisible''\ \cite[p.\ 16]{Garb2014}. 
Their counterparts in Ethiopia and Zambia had equipment from the same company, but they faced problems in the surrounding socio-technical system (of expertise, supply of water, and so on). Many farmers in Sub-Saharan Africa are ``gradually convert[ing] [their farms] back to furrow irrigation as each block of the buried drip irrigation fails''\ \cite[p.\ 19]{Garb2014}. 
Expressed in the language of our model, dysfunctional inputs (low $\fracFunctional$) can result in crops failing and farmers choosing simpler, less productive technology (lower $\threshold$). 
Drip irrigation has succeeded in poor regions (such as in India) not because the technology was simplified but because of sufficient support from the surrounding socio-technical system\ \cite[p.\ 22]{Garb2014}.
}


\section*{Discussion} 

Poverty traps have long been used to explain disparities of incomes across countries and to justify a ``big push'', a coordinated investment in many sectors to unleash growth\ \cite{Murphy1989,Sachs2004,Azariadis2005,Barrett2016}.  
Yet many big pushes have failed\ \cite{easterly2006reliving}\new{, and understanding why is paramount. }

\new{
Our model views industrialisation as mutually-reinforcing supply chains, broadly defined, that become more complex over time. 
Disruptions in these supply chains can spread contagiously. } This systemic fragility can cause complex technologies to fail. 
Even if all firms coordinate their industrialization (as suggested by big push theories\ \cite{Murphy1989,Azariadis2005}), if the firms jump too far in technological complexity without sufficient buffers against disruptions, then the economy can slide backward, becoming poorer and less reliable. As in other complex systems\ \cite{Gershenson2015}, going slower may result in collectively going faster.

\newThree{
This work sits at the intersection of competing theories of economic development. According to research on poverty traps\ \cite{Murphy1989,Sachs2004,Azariadis2005,Barrett2016}, positive feedback loops keep populations stuck in poverty, and escaping these traps requires substantial investments. According to research on institutions, differences in the rules of the game (such as property rights and rule of law) explain why economies have diverged\ \cite{north1990institutions,Acemoglu2005}. Our model provides a bridge between these views. Because some inputs for production are from government actors, our model is consistent with the institutional view of development: better institutions may imply fewer disruptions, less uncertainty, and hence greater appetite for complexity. Consistent with the poverty trap literature, our model has multiple equilibria. However, whereas poverty traps typically suggest making a large investment, we find reason for caution: without a focus on reliability of the surrounding system, big changes fail. Interventions to many parts of a system may be needed.} Unreliability affects economic performance in a multifaceted way, involving risk\ \cite{Morduch1994}, network contagion, technology adoption\ \cite{Garb2014}, and psychology\ \cite{PsychologyOfPoverty}. Understanding their interplay can elucidate the causes of persistent poverty.

\begin{description}
 \item [Acknowledgments] C.D.B. and M.H.B. acknowledge funding from the James S. McDonnell Foundation for the Postdoctoral Award and the Scholar Award (respectively) in Complex Systems. P.P. and F.V.-R. acknowledge funding from the Italian Ministry of Education Progetti di Rilevante Interesse Nazionale (PRIN) grant 2015592CTH. 
 The funders had no role in the conceptualization, design, data collection, analysis, decision to publish, nor the preparation of the manuscript.
 \item [Author Contributions] All authors designed research, performed research, and wrote the paper; C.B. and K.H. analyzed data.
 \item [Competing Interests] The authors declare that they have no
competing interests.
\end{description}

\section*{\newThree{Methods}}
\newThree{
In Supplementary Note 4, 
we compute the finite set of strategies that could be a best response for a given $\alpha$, $\beta$, and $\fracFunctional(t)$. This derivation enables the computations used to make Figures\ \ref{fig:example_trajectory},\ \ref{fig:overshoot},\ \ref{fig:phase_diagram}, and\ \ref{fig:inventory}(b).
}

\subsection*{Data availability}
The empirical data used in Figure\ \ref{fig:disruptions}\ and Figure\ \ref{fig:inventory}(a) are \newThree{available from the original sources\ \cite{ECI2011,AtlasEconomicComplexity,GDPperCapitaWorldBank,NaturalDisasterRiskIndex,AdultMortalityRate,EnterpriseSurvey} and \href{https://doi.org/10.5281/zenodo.823260}{at the GitHub repository associated with this paper} (\href{https://doi.org/10.5281/zenodo.823260}{DOI \texttt{10.5281/zenodo.823260}})}. 

\subsection*{Code availability}
The code used to produce the results in this paper are \newThree{\href{https://doi.org/10.5281/zenodo.823260}{available at the GitHub repository associated with this paper} (\href{https://doi.org/10.5281/zenodo.823260}{DOI \texttt{10.5281/zenodo.823260}})}. 
\new{Figures\ \ref{fig:disruptions}--\ref{fig:inventory} are created in the Wolfram Language (version 11) and can be read for free using the CDF Player and run for free using the Wolfram Cloud. \newThree{Supplementary} Figure 1 was created using Python (3.5.2) and NumPy (1.11.3). 
}

\clearpage 
\newpage

\onecolumn
\begin{center}
{\huge Supplementary Information}
\end{center}

\setcounter{figure}{0}
\setcounter{section}{0}
\makeatletter 
\renewcommand{\thefigure}{SI-\@arabic\c@figure}
\renewcommand{\theequation}{SI-\@arabic\c@equation}
\renewcommand{\thetable}{SI-\@arabic\c@table}
\renewcommand{\thesection}{SI-\@arabic\c@section}
\makeatother

\section{Derivation of the ODE\ \eqref{eq:ODE_definition_subequations}}
\label{sec:derive_ODE}

Suppose that we have a large number of agents $\numNodes \in \naturalNum$, and let $\numNodesFunctional(t)$ be the number of agents who are functional at time $t$, so that 
\begin{align*}
\actualFracFunctional(t) \equiv \frac{\numNodesFunctional(t)}{\numNodes}
\end{align*} 
is the fraction of agents who are functional at time $t$. $\actualFracFunctional(t)$ is a stochastic process. Here we derive a mean-field approximation for the master equation of $\actualFracFunctional(t)$, which is an ordinary differential equation (ODE) for the expected fraction of functional agents at time $t$, $\fracFunctional(t) \equiv \E \actualFracFunctional(t)$. 

\subsection{Transition probabilities for a single agent}
Attempts at producing are events that occur randomly according to a Markov process. All agents attempt to produce at rate $1$. 

Agents also become dysfunctional at rate $\decay$ for ``exogenous'' reasons that are not explicitly modeled. (We have in mind that these exogenous failures are isolated incidents such as fires, natural disasters, insect infestations, and so on.) These exogenous failures occur independently to each agent according to a Poisson process with rate $\decay$. 

First we focus attention on a single agent. Let $\diff t$ be a small, positive amount of time. 
Let $\omega(\functionalState \to \dysfunctionalState) \diff t$ be the chance that an agent that is functional at time $t$ is dysfunctional at time $t+\diff t$. Similarly, let $\omega(\dysfunctionalState \to \functionalState) \diff t$ be the probability that a dysfunctional agent at time $t$ is functional at time $t + \diff t$. 

There are many ways in which a functional agent at time $t$ could become dysfunctional at time $t + \diff t$. In that short amount of time, this agent could do the following transitions: 
\begin{enumerate}
\item $\functionalState \to \dysfunctionalState$: the agent tries to produce once and fails. This event occurs with probability $\diff t (1-P) + \mathcal{O}(\diff t^2)$.
\item $\functionalState \to \dysfunctionalState$: the agent becomes dysfunctional for an exogenous reason. This event occurs with probability $\decay \diff t + \mathcal{O}(\diff t^2)$.
\item $\functionalState \to \functionalState \to \dysfunctionalState$: the agent tries to produce twice, succeeding on the first try and failing on the second try, and does not suffer any exogenous failures. This event occurs with probability $\mathcal{O}(\diff t^2)$.
\item $\functionalState \to \dysfunctionalState \to \functionalState \to \dysfunctionalState$: the agent tries to produce three times, succeeding on only the second of those tries, and the agent does not suffer any exogenous failures. This event occurs with probability $\mathcal{O}(\diff t^2)$.
\item And so on.
\end{enumerate}
The chance of the first event is the chance that this agent is chosen to attempt to produce only once in the amount of time $\diff t$, which occurs with probability $\diff t e^{-\diff t} = \diff t + \mathcal{O}(\diff t^2)$, times the chance that the agent fails to produce in that attempt, which is $1-\chanceSuccessSymbol$, times the chance that no exogenous failure hit the agent, which is $1 - \decay \diff t + \mathcal{O}(\diff t^2)$. In sum, this event occurs with probability $\diff t (1-P) + \mathcal{O}(\diff t^2)$.

The chance of the second event above is the chance that the agent is not chosen to attempt to produce in that time interval $\diff t$, which is $1 - \diff t$, times the chance of failing for exogenous reasons in an amount of time $\diff t$, which is $\decay \diff t e^{-\decay} = \decay \diff t + \mathcal{O}(\diff t^2)$. 

The probabilities of all the other events are $\mathcal{O}(\diff t^2)$ because they require getting chosen at least twice to attempt to produce. Therefore, to first approximation in this small, positive amount of time $\diff t$, the chance that a functional agent at time $t$ is dysfunctional at time $t+\diff t$ is
\begin{subequations}
\begin{align}
\omega(\functionalState \to \dysfunctionalState) \diff t &= \diff t \times (1-\chanceSuccessSymbol + \decay) + \mathcal{O}(\diff t^2).
\end{align}

Similarly, the chance $\omega(\dysfunctionalState \to \functionalState) \diff t$ that a dysfunctional agent at time $t$ is functional at time $t + \diff t$ is the chance $1 \times \diff t$ that this agent is chosen to produce once in the time interval $(t, t+\diff t)$, times the probability $\chanceSuccessSymbol$ of successfully producing in that one attempt, times the chance $1 - \decay \diff t$ of not being hit by an exogenous failure in time $\diff t$, plus higher-order terms $\mathcal{O}(\diff t^2)$. Thus,
\begin{align}
\omega(\dysfunctionalState \to \functionalState) \diff t &= \diff t \times \chanceSuccessSymbol + \mathcal{O}(\diff t^2).
\end{align} 
\label{eq:jump_rates}
\end{subequations}
In the limit $\diff t \to 0$, the events involving multiple jumps in the time interval $(t, t + \diff t)$ occur with vanishing probability, so we approximate the system using only the terms in equations\ \eqref{eq:jump_rates} that are linear in $\diff t$. 

\subsection{Global rates}
Now we consider a system of many agents. Again, let $\diff t$ be a small, positive amount of time. Let $\Omega(\numNodesFunctional \to \numNodesFunctional + 1) \diff t$ be the chance that a system with $\numNodesFunctional$ functional agents at time $t$ has $\numNodesFunctional+1$ functional agents at time $t+\diff t$. To first-order in $\diff t$, we have 
\begin{subequations}
\begin{align}
\Omega(\numNodesFunctional \to \numNodesFunctional+1) \diff t = (\numNodes-\numNodesFunctional) \omega(\dysfunctionalState \to \functionalState) \diff t + \mathcal{O}(\diff t^2)
\end{align}
because each of the $\numNodes-\numNodesFunctional$ dysfunctional agents at time $t$ becomes functional at time $t+\diff t$ with probability $\omega(\dysfunctionalState \to \functionalState) \diff t$. Here we are neglecting the probabilities of events that are of order $\mathcal{O}(\diff t^2)$, such as the event in which two agents change from $\dysfunctionalState \to \functionalState$ and one changes from $\functionalState \to \dysfunctionalState$ during the time interval $(t, t+\diff t)$. Similarly, the chance that a system with $\numNodesFunctional$ functional agents at time $t$ has $n-1$ functional agents at time $t+\diff t$ is 
\begin{align}
\Omega(\numNodesFunctional \to \numNodesFunctional-1) \diff t = \numNodesFunctional \omega(\functionalState \to \dysfunctionalState) \diff t + \mathcal{O}(\diff t^2).
\end{align}
\label{eq:global_rates}
\end{subequations}

\subsection{Mean-field approximation of the master equation}
From~\cite[equation 8.95]{MasterEquations}, the expected number $\E \numNodesFunctional(t)$ of functional agents at time $t$ changes over time according to
\begin{align}
\frac{\diff \E \numNodesFunctional(t)}{\diff t} = - \sum_{\ell} \ell \E [\Omega(\numNodesFunctional(t) \to \numNodesFunctional(t) - \ell)]
\label{eq:8p95}
\end{align}
where $\Omega[\numNodesFunctional(t) \to \numNodesFunctional(t) - \ell]$ is the instantaneous rate at which the system jumps from $\numNodesFunctional(t)$ functional agents to $\numNodesFunctional(t) - \ell$ functional agents. 
In our case, by combining\ \eqref{eq:jump_rates} and\ \eqref{eq:global_rates} in\ \eqref{eq:8p95} and dividing both sides of the equation by $\numNodes$, we have
\begin{align*}
\frac{\diff \E \actualFracFunctional(t)}{\diff t} = - \E \left [ \actualFracFunctional(t)  (1-\chanceSuccessSymbol + \decay) \right ] + \E \left [ (1-\actualFracFunctional(t)) \chanceSuccessSymbol \right ] + \mathcal{O}(\diff t^2).
\end{align*}
The \emph{mean-field approximation} \cite[page 255]{MasterEquations} is that fluctuations of $\actualFracFunctional(t) \equiv \numNodesFunctional(t) / \numNodes$ can be ignored because the number of nodes $\numNodes$ is large, so $\E \left [ \actualFracFunctional(t)^k \right ] \approx \left [ \E   \actualFracFunctional(t) \right ]^k \equiv \left [ \fracFunctional(t) \right ]^k$. Thus, because $\chanceSuccessSymbol$ is a polynomial in $\actualFracFunctional$, we approximate $\E \chanceSuccessSymbol \approx \prob{\text{Binomial}\!\left (\degree, \fracFunctional(t) \right ) \geq \threshold}$, which (for simplicity) is how we defined the chance of success $\chanceSuccessSymbol$ in the paper. Neglecting higher-order terms $\mathcal{O}(\diff t^2)$ gives equation\ \eqref{eq:ODE_definition_subequations} in the main text:
\begin{align*}
\frac{\diff \fracFunctional(t)}{\diff t} &= -\fracFunctional (1 - \chanceSuccessSymbol +  \decay) + (1 - \fracFunctional) \chanceSuccessSymbol \\ 
&= \chanceSuccessSymbol - \fracFunctional (1 + \decay).
\end{align*}

\new{
\section{
\newTwo{Deriving the production function and interpreting the production process}}
\label{sec:production_function}
\newTwo{This section has two goals. The first goal is} to derive the production function used in equation\ \eqref{eq:utility} from a standard production function capturing the benefits of specialization \ \cite{romer1987growth,Romer1990,Ethier1982} and a modification of the threshold rule in the O-Ring model\ \cite{Kremer1993}. \newTwo{The second goal is to interpret this production process in a few ways.}

To begin, we fix notation. Consider an agent who attempts to produce a good. Suppose that agent attempts to procure $\degree$ inputs from $\degree$ suppliers. The agent demands quantities of those inputs that we denote by $\demanded_1, \demanded_2, \dots, \demanded_\degree$. 

When arranging for these inputs, it is not clear whether they will be successfully delivered and arrive functional, not broken, not spoiled, and so on. Denote by $x_1, x_2, \dots, x_\degree$ the quantities of inputs that were \newTwo{successfully} delivered in a functional \newTwo{state}: assign $x_i$ a positive quantity if the $i$th input is delivered in functional, working condition, and assign $x_i$ the value $0$ if it is not delivered successfully or if the input is not functional for some reason. The output of the agent is denoted by the production function $f(x_1, x_2, ..., x_\degree).$

\subsection{O-Ring-like threshold rule}

First, we assume that some of the inputs are essential for production to succeed, and without those critical inputs production fails. For simplicity, we do not keep track of which inputs are essential but merely count the number of them. 
Let
\begin{align}
\numSuccesses := \sum_{i \, : \, x_i > 0} 1 \label{eq:num_successful}
\end{align}
denote the number of inputs that are successfully delivered. 
\begin{assumption}[O-Ring-like threshold rule]
Production succeeds [i.e., output $f(x_1, x_2, \dots, x_\degree)$ is positive] if and only if the number of functional inputs $S$ equals or exceeds a threshold number $\threshold$.
\label{assumption:threshold}
\end{assumption}
Assumption\ \ref{assumption:threshold} is related to the production function used in the ``{O-Ring}'' model of economic development\ \cite{Kremer1993}. In the O-Ring model, a product is produced in a sequence of steps. Failure of any of these steps drastically reduces the value of the product.  Here, we generalize the O-Ring rule such that firms can continue to produce as long as redundancy ($\degree - \threshold$) offsets the failures of inputs.

For simplicity, we assume that extra functional inputs are not used:
\begin{assumption}[\newTwo{Number of inputs used}] If more than $\threshold$ functional inputs are available, then only $\threshold$  inputs are used in production.
\label{assumption:substitutes}
\end{assumption}

\newTwo{
Next we interpret this production process in a few ways. 

\subsection{Four interpretations of the production process}
\label{sec:interpretations_production_process}

In this subsection, we briefly reiterate the primary interpretation of the production process that we use in the main text, and then we give three alternative interpretations.

\paragraph{Firms simultaneously procure inputs, and we abstract from which pairs are substitutable}
\label{sec:simultaneous_procure}

The interpretation that we focus on in the manuscript is that a firm procures $\degree$ inputs simultaneously from other agents, and some of those $\degree$ inputs are perfect substitutes for one another. In reality, pairs of inputs have differing amounts of substitutability, and considering this complexity in input--output networks is an ongoing challenge for theorists\ \cite{Carvalho2014_Perspective}. In this interpretation, we simplify by ignoring which specific pairs are substitutes. 

\paragraph{Firms sequentially procure inputs}
\label{sec:sequential_interpretation}

In this interpretation, we do track which inputs are substitutable. Consider firms searching for inputs over a certain period of time. If the firm does not successfully produce within that period of time, then the firm fails to produce anything and becomes dysfunctional, for example because the inputs spoil, or because debts must be repaid, or because the growing season has passed. 

Suppose that each of the $\threshold$ different inputs needed (from Assumption\ \ref{assumption:threshold}) is produced by a large set of identical firms. Each of these sets of firms (or ``industries'') produces a distinct good. Other than that, these industries are completely symmetric, so the (expected) fraction of functional firms in each industry is the same value, $\fracFunctional(t)$. 

A firm commits in advance to a budget for how many times ($\degree$) it can attempt to procure a functional input from one of these $\threshold$ many industries. Each attempt to procure a functional input is an independent Bernoulli trial with probability $\fracFunctional(t)$. The firm procures inputs until either it gets $\threshold$ or more functional inputs (with one functional input from each of the $\threshold$ industries) or until it has exhausted its $\degree$ attempts. 

Although the firm draws Bernoulli trials sequentially, suppose for theoretical purposes that we knew the results of the next $\degree$ Bernoulli trials. The firm will observe some or all of these trials, depending on their outcomes. The chance that the firm succeeds in producing is the chance that these $\degree$ trials contain at least $\threshold$ successes. In other words, the chance that the firm succeeds in producing is the chance that a binomial random variable with parameters $\degree$ and $\fracFunctional(t)$ is greater than or equal to $\threshold$. Provided that the time period in which the firm can attempt to procure $\degree$ inputs is brief enough so that $\fracFunctional(t)$ can be assumed to be constant, this interpretation is consistent with the model studied in this paper.

\paragraph{Search intensity}
\label{sec:search_intensity}

Another interpretation involves search intensity and research costs. We illustrate this interpretation using the case study on drip irrigation\ \cite{Garb2014} summarized in the manuscript. 

Consider a farmer preparing for the upcoming growing season. The farmer decides to attempt to use drip irrigation, which requires a large number ($\threshold$) of inputs in order to succeed, such as the $6$-tuple $(\text{seeds, labor, rubber tubes, pump, know how, municipal water})$.

The farmer has a limited amount of time to research which kinds of seeds, pump, and rubber tubes best match the local conditions, including the acidity of the soil, the pressure of the water supply, which animals might chew the tubes, and so on. The farmer attempts to procure each of the $\threshold = 6$ inputs and can reduce the risk that the crop fails during the coming growing season by spending more time and effort researching which inputs are most compatible with the local conditions. 

The mean-field approximation studied here captures this scenario in a stylized way. 
We denote by $x \in \{0, 1, 2, \dots\}$ a discrete measurement of the intensity with which the farmer searches for the best or most compatible inputs. 
We assume that the probability that the farmer successfully produces is a function of the complexity $\threshold$ of the technology, the intensity $x$ of her search, and the reliability $\fracFunctional(t)$ of the economy. Formally, in our model this probability is identified with the probability that a binomial random variable with parameters $\threshold + x$ and $\fracFunctional(t)$ is at least $\threshold$. The ``research costs'' are $\alpha \times (\threshold + x)$. A more natural model might have the search intensity reduce the failure rate in a continuous way, but this formulation with a binomial random variable has qualitatively similar behavior.

\paragraph{Production flexibility}
\label{sec:production_flexibility}
The fourth and final interpretation considers flexibility in what a firm can produce from a certain collection of inputs. Assume that a firm tries to procure $\degree$ distinct inputs that can be combined to produce $\binom{\degree}{\threshold}$ many distinct products that involve $\threshold$ inputs each. 
All of these products are assumed to have similar market potential. When the number of available functional inputs is $\threshold$ or greater, a firm is indifferent to which output to produce, so it randomizes among them.

The difference $\degree - \threshold$ is a stylization of manufacturing flexibility studied in management literature\ \cite{DSouza2000toward}, in particular of ``(product) range flexibility'', which is the number of products a firm can produce without prohibitive switching costs\ \cite{DSouza2000toward,Upton1997process}. 
Greater flexibility enables the firm to continue producing even if some inputs are not available in functional form. 
}

\newThree{
This interpretation is also conceptually related to the microfoundation of a ``global production function'' developed in \cite{jones2005shape}. 
In\ \cite{jones2005shape}, the global production function computes the maximum amount of output per worker that can be produced using an available set of ``production techniques'', which are mappings from inputs (such as capital and labor) to output. 
To see the conceptual relationship with the ``production flexibility'' interpretation of our model, note that we can interpret number $\degree \choose \threshold$ as the number of production techniques that a firm can use to produce output, in the spirit of \cite{jones2005shape}. 
Each production technique is a mapping from $\threshold$ many inputs, $(x_1, \dots, x_\threshold)$, to output. 
The output of a production technique is given by equation\ \eqref{eq:ProdFunc} (found below) provided that enough inputs are functional; otherwise, the output is 0. 
The production techniques available to a firm are endogenously determined by the firm's choice of $(\degree, \threshold)$ (which are chosen based on the state of the economy $\fracFunctional(t)$). 
A firm optimally chooses the production technique that results in the highest output. 
Given the complexity of the problem, deriving the shape of a firm's global production function, as defined in \cite{jones2005shape}, is outside the scope of this paper and is left for the future research. 
}

\subsection{Positive returns due to specialization}

Next we introduce a standard production function that has been widely used in the economic literature to model the gains from specialization, interpreting such specialization as an increase in the number of distinct inputs combined to carry out production. 
\begin{assumption}Suppose an agent has enough functional inputs ($\numSuccesses \geq \threshold$). Rearrange the indices of the inputs $x_1, x_2, \dots, x_\degree$ so that $x_1, x_2, \dots, x_\threshold$ are all positive. An agent's output is given by the following CES (constant elasticity of substitution) production function:
 \begin{align}\label{eq:ProdFunc}
f(x_1, x_2, ..., x_\threshold) = \left ( \sum_{i=1}^\threshold \left ( x_i \right )^\rho \right )^{\frac{1}{\rho}},
\end{align}
where $\rho \in (1/2, 1)$.
\label{assumption:CES}
\end{assumption}

To understand why\ \eqref{eq:ProdFunc} indeed embodies gains from specialization, consider the following thought experiment. Suppose input prices are fixed -- for simplicity, let all of them be equal -- and consider a firm that has at least $\threshold$ many functional inputs available (i.e., $\numSuccesses \geq \threshold$). Rearrange the indices of the inputs $x_1, x_2, \dots, x_\degree$ so that $x_1, x_2, \dots, x_\threshold$ are all positive. If the firm wants to maximize production and has a certain amount of funds $E$ available to buy inputs, then it must demand the same amount of each of the intermediate inputs used. Thus, by Assumption  \ \ref{assumption:substitutes}, it can use for example the first $ \threshold$ and demand $u_i = x_i = E / \threshold$ for $i = 1, 2, \dots, \threshold$, so the output is
\begin{align}
f(x_1, x_2, ..., x_\threshold) 
= E \times \threshold^{(1/\rho) - 1} = E \threshold^\beta
\label{eq:spread_E_evently}
\end{align}
where $\beta := (1 / \rho) - 1$. Then, if we further make the usual assumption that the elasticity of substitution $\sigma := 1 / (1 - \rho)$ is positive, we arrive at the conclusion that $\rho \in (1/2, 1)$, so the elasticity of substitution $1 / (1 - \rho)$ is positive and $\beta \in (0, 1)$. In this case, we observe from equation\ \eqref{eq:spread_E_evently} that, as the number $\threshold$ of intermediate inputs purchased with the same funds $E $ increases, so does the induced output. Hence, in an natural way, we may describe the production technology as displaying \emph{positive returns due to specialization}\ \cite{romer1987growth,Romer1990,Ethier1982}; that is, efficiency grows as the production process is subdivided into a larger set of more ``specialized tasks." 

Without any essential loss of generality, we may normalize the expenditure $E$ to one. Then the production function becomes
\begin{align}\label{eq:ProdFunc_simplified}
f(x_1, x_2, ..., x_m) = 
\begin{cases} 
{\tau}^{\beta} & \text{if} \quad S \equiv \sum_{i \, : \, x_i > 0} 1 \geq \threshold \\
0 & \text{otherwise}
\end{cases}.
\end{align}
Equation\ \ref{eq:ProdFunc_simplified} is the production function that we use in equation\ \eqref{eq:utility} of the main text to determine the utility attained at any level of specialization reflected by $\threshold$, given the prevailing fraction $\fracFunctional$ of functional agents and the cost induced by the number $m$ of inputs they attempt to procure.}

\newThree{
\section{Interpretations of the marginal cost $\alpha$ of finding and procuring inputs\label{sec:interpret_alpha}}
We assume that attempting to get an input from another agent costs $\alpha > 0$. This parameter can have one of several interpretations:
\begin{description}
\item [Search costs] The parameter $\alpha$ may capture the time and energy required to find each new supplier.
\item [Maintaining multiple suppliers for an input] Maintaining multiple suppliers for the same input can be costly when that requires changing the product, working with suppliers to develop alternatives, or overcoming quality issues with alternative inputs\ \cite{ProvitiReport}.
\item [Incentivising suppliers to have multiple manufacturing sites] If a firm has a certain crucial input with just one supplier (a ``strategic component''), it may ``provid[e] incentives to [those] suppliers to have multiple manufacturing sites in different regions''\ \cite{SupplyChainDisruptionsHBR2014}.
\item [Maintaining inventory of inputs] Inventory costs can be high: ``as product life cycle shortens and as product variety increases, the inventory holding and obsolescence costs of these additional safety stock inventories could be exorbitant''\ \cite{Tang2006}.
\end{description}
}

\section{Strategies that could be a best response}
\label{sec:strategies_that_could_be_best_response}

Recall that the agents' decision problem is to maximise the utility\ \eqref{eq:utility} 
$$\utility{\degree}{\threshold}{\fracFunctional(t)}{\alpha}{\beta} 
= \prob{\text{Binomial}({\degree, \fracFunctional}) \geq \threshold} \threshold^\beta - \alpha \degree$$ over all pairs of non-negative integers $(\degree, \threshold) \in \mathbb{N}^2$, where $\mathbb{N} \equiv \{0, 1, 2, 3, \dots\}$.
The set of strategies $(\degree, \threshold)$ that could be a best response turns out to be a finite set, which 
enables numerical simulations. 
\begin{lemma}
A best response $(\degree^*, \threshold^*)$ must belong to the set 
\begin{align}
\{(\degree, \threshold) \in \mathbb{N}^2 : \degree = 0 \text{ \emph{or} } 0 < \threshold \leq \degree < \threshold^{\beta} / \alpha \}.  
\end{align}
\label{lem:could_be_best_response}
\end{lemma}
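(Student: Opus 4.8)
The plan is to exploit the fact that the ``autarky'' strategy $(\degree, \threshold) = (0,0)$ yields utility exactly $0$: since $\utility{0}{0}{\fracFunctional}{\alpha}{\beta} = \chanceSuccess{0}{0}{\fracFunctional}\cdot 0^\beta - \alpha\cdot 0 = 0$, every maximiser $(\degree^*, \threshold^*)$ must satisfy $\utility{\degree^*}{\threshold^*}{\fracFunctional}{\alpha}{\beta} \geq 0$. If $\degree^* = 0$ there is nothing to prove, since all such pairs lie in the set by definition. So I would assume $\degree^* > 0$ and establish the three remaining constraints $\threshold^* > 0$, $\threshold^* \leq \degree^*$, and $\degree^* < (\threshold^*)^\beta/\alpha$ in turn.

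First I would dispatch the two ``feasibility'' constraints, each by showing that violating it forces strictly negative utility, contradicting $\utility{\degree^*}{\threshold^*}{\fracFunctional}{\alpha}{\beta} \geq 0$. If $\threshold^* = 0$, then $0^\beta = 0$ kills the benefit term and $\utility{\degree^*}{0}{\fracFunctional}{\alpha}{\beta} = -\alpha\degree^* < 0$ because $\degree^* > 0$ and $\alpha > 0$; hence $\threshold^* \geq 1$. If $\threshold^* > \degree^*$, then a $\text{Binomial}(\degree^*,\fracFunctional)$ variate takes values in $\{0,\dots,\degree^*\}$ and can never reach $\threshold^*$, so $\chanceSuccess{\degree^*}{\threshold^*}{\fracFunctional} = 0$ and the utility again collapses to $-\alpha\degree^* < 0$; hence $\threshold^* \leq \degree^*$. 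Together these give $0 < \threshold^* \leq \degree^*$.

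For the cost bound I would feed the trivial probability bound $\chanceSuccess{\degree^*}{\threshold^*}{\fracFunctional} \leq 1$ into the inequality $\utility{\degree^*}{\threshold^*}{\fracFunctional}{\alpha}{\beta} \geq 0$, which reads $\alpha\degree^* \leq \chanceSuccess{\degree^*}{\threshold^*}{\fracFunctional}\,(\threshold^*)^\beta \leq (\threshold^*)^\beta$, i.e.\ $\degree^* \leq (\threshold^*)^\beta/\alpha$. To upgrade this to the strict inequality claimed, I need $\chanceSuccess{\degree^*}{\threshold^*}{\fracFunctional} < 1$. This is automatic whenever $\fracFunctional < 1$: since $\threshold^* \geq 1$, the event $\{\text{Binomial}(\degree^*,\fracFunctional) = 0\}$ has probability $(1-\fracFunctional)^{\degree^*} > 0$ and sits inside $\{\text{Binomial}(\degree^*,\fracFunctional) < \threshold^*\}$, so $\chanceSuccess{\degree^*}{\threshold^*}{\fracFunctional} \leq 1 - (1-\fracFunctional)^{\degree^*} < 1$, giving $\alpha\degree^* < (\threshold^*)^\beta$ as required.

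The delicate point—and the step I expect to be the main obstacle—is precisely the strictness at the boundary $\fracFunctional = 1$, where $\chanceSuccess{\degree}{\threshold}{\fracFunctional} = 1$ for every $\threshold \leq \degree$ and the argument above degenerates. I would treat this case directly: with $\fracFunctional = 1$ the utility equals $\threshold^\beta - \alpha\degree$ on the feasible region $\threshold \leq \degree$, which is strictly decreasing in $\degree$, so optimality forces $\degree^* = \threshold^*$. If the resulting optimal utility $(\threshold^*)^\beta - \alpha\threshold^*$ is strictly positive, the strict bound is immediate; if it equals zero, the autarky pair $(0,0)$ is also optimal, so one either restricts attention to $\fracFunctional < 1$ or excludes the knife-edge $\alpha = (\threshold^*)^{\beta-1}$ by a genericity argument. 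Since the lemma's only purpose is to confine best responses to a finite set for numerical enumeration, this boundary subtlety is harmless for its intended use.
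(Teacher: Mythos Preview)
Your proposal is correct and follows essentially the same approach as the paper: compare any candidate $(\degree^*,\threshold^*)$ with $\degree^*>0$ to the autarky strategy $(0,0)$, rule out $0<\degree^*<\threshold^*$ and $\threshold^*=0$ by negativity, and then bound $\degree^*$ via $\chanceSuccessSymbol\leq 1$. In fact you are more careful than the paper, which simply asserts that the utility must strictly exceed zero without justifying the strict inequality; your case split on $\fracFunctional<1$ versus $\fracFunctional=1$ (and the knife-edge remark) patches a small gap the paper leaves implicit.
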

\begin{proof}
First observe that if $\degree = 0$ then the utility is zero. If $0 < \degree < \threshold$, then the utility is negative, so a best response cannot have $0 < \degree < \threshold$. Hence a best response must have $\degree = 0$ or $\degree \geq \threshold > 0$. 

In the latter case (with $\degree \geq \threshold > 0$), the utility must exceed zero (the utility obtained with $\degree = 0$), so 
\begin{align*}
\prob{\text{\emph{Binomial}}({\degree, \fracFunctional}) \geq \threshold} \threshold^\beta - \alpha \degree > 0,
\end{align*}
or, after rearranging and dividing by $\alpha >0$, 
\begin{align}
\degree < \prob{\text{\emph{Binomial}}({\degree, \fracFunctional}) \geq \threshold} \frac{\threshold^\beta}{\alpha} \leq \frac{\threshold^\beta}{\alpha}. 
\label{eq:upper_bound_m}
\end{align}
$\hfill \blacksquare$
\end{proof}

\begin{corollary}
A best response $(\degree^*, \threshold^*)$ with $\degree^* > 0$ must satisfy $\threshold^* \leq \alpha^{{-1}/(1-\beta)}$ and $\degree^* < \alpha^{-1/(1-\beta)}$. 
\end{corollary}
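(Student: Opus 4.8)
The plan is to derive the corollary directly from Lemma~\ref{lem:could_be_best_response}, specializing to the case $\degree^* > 0$ and then isolating $\threshold^*$ and $\degree^*$ by elementary manipulation of exponents. Since we are given $\degree^* > 0$, the branch $\degree = 0$ of the admissible set in Lemma~\ref{lem:could_be_best_response} is ruled out, so a best response must satisfy the chain $0 < \threshold^* \leq \degree^* < (\threshold^*)^{\beta}/\alpha$. Everything follows from reading off consequences of this chain.

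First I would extract the bound on $\threshold^*$. Comparing the first and last members of that chain gives $\threshold^* < (\threshold^*)^{\beta}/\alpha$, equivalently $\alpha\,(\threshold^*)^{1-\beta} < 1$, i.e.\ $(\threshold^*)^{1-\beta} < \alpha^{-1}$. Because $\beta \in (0,1)$ the exponent $1-\beta$ is positive, so the map $z \mapsto z^{1/(1-\beta)}$ is increasing on the positive reals; applying it to both sides preserves the inequality and yields $\threshold^* < \alpha^{-1/(1-\beta)}$, which in particular gives the claimed $\threshold^* \leq \alpha^{-1/(1-\beta)}$.

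Next I would feed this back into the upper bound on $\degree^*$. From $\degree^* < (\threshold^*)^{\beta}/\alpha$ together with the bound on $\threshold^*$ just obtained, and using that $z \mapsto z^{\beta}$ is increasing since $\beta > 0$, I would substitute $(\threshold^*)^{\beta} < \alpha^{-\beta/(1-\beta)}$ to get $\degree^* < \alpha^{-\beta/(1-\beta)}/\alpha$. Combining the exponents via $-\beta/(1-\beta) - 1 = -1/(1-\beta)$ then delivers $\degree^* < \alpha^{-1/(1-\beta)}$, completing both bounds.

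I do not expect a genuine obstacle: the entire argument is bookkeeping with exponents once the chain $0 < \threshold^* \leq \degree^* < (\threshold^*)^{\beta}/\alpha$ is in hand. The only point requiring care is the direction of the inequalities when raising to fractional powers, and this is controlled throughout by the signs of $1-\beta$ and $\beta$ under the standing assumption $\beta \in (0,1)$.
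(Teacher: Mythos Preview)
Your proposal is correct and follows essentially the same route as the paper: both start from the chain $0<\threshold^*\le\degree^*<(\threshold^*)^\beta/\alpha$ given by Lemma~\ref{lem:could_be_best_response}, deduce $\threshold^*<\alpha^{-1/(1-\beta)}$ from $\threshold^*<(\threshold^*)^\beta/\alpha$, and then feed that back into $\degree^*<(\threshold^*)^\beta/\alpha$ to obtain $\degree^*<\alpha^{-1/(1-\beta)}$. The only cosmetic difference is that the paper phrases the first step as ``equating the bounds'' and locating the nonzero root $\threshold=\alpha^{-1/(1-\beta)}$ of $\threshold=\threshold^\beta/\alpha$, whereas you manipulate the inequality directly; the content is identical.
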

\begin{proof}
From Lemma\ \ref{lem:could_be_best_response}, we know that if $(\degree, \threshold)$ is a best response with $\degree > 0$ then $\threshold \leq \degree < \threshold^{\beta} / \alpha$. 
Now we equate these lower and upper bounds on $\degree$. 
The equation $\threshold = \threshold^{\beta} / \alpha$ has a two solutions: $\threshold = 0 $ and $\threshold = \alpha^{{-1}/(1-\beta)}$. The latter provides provides an upper bound on $\threshold$ in the best response. 
From Lemma~\ref{lem:could_be_best_response} we know that a best response must have $\degree < \threshold^\beta / \alpha$, so  $\degree < \threshold^\beta / \alpha \leq \alpha^{-1/(1-\beta)}$. 
$\hfill \blacksquare$
\end{proof}

Numerical calculations of a best response can be sped up slightly using the following two observations:
\begin{enumerate}
\item If $\degree = \threshold > 0$ is a best response, then $\utility{\degree}{\degree}{\fracFunctional(t)}{\alpha}{\beta}  = \fracFunctional(t)^\degree \degree^\beta - \alpha m >  \utility{0}{0}{\fracFunctional(t)}{\alpha}{\beta}  = 0$, which can be rewritten as 
\begin{align}
m<\frac{(\beta -1)}{\log \left [ \fracFunctional(t) \right ]} W\left(\frac{\left(\frac{1}{\alpha}\right)^{\frac{1}{1-\beta}} \log \left [ \fracFunctional(t) \right ]}{\beta -1}\right),
\label{eq:solution_max_m_for_m_equals_tau}
\end{align}
where $W$ is the product logarithm (i.e., the Lambert W function). This upper bound on the diagonal $\degree = \threshold$ can be tighter than the one in inequality~\eqref{eq:upper_bound_m}. 
\item If $\fracFunctional(t) = 1$, then every strategy with $\degree \geq \threshold$ will certainly succeed [i.e., $\prob{\text{Binomial}({\degree, \fracFunctional(t)}) \geq \threshold} = 1$], so the best response must have $\degree = \threshold$, and the first-order condition for the utility $\degree^\beta - \alpha \degree$ shows that the best-response $\degree = \threshold$ is either the floor or the ceiling of $\left({\beta/\alpha}\right)^{{1}/{1-\beta}}$ (whichever one results in more utility).
\end{enumerate}

If practice, if the best response is $(0, \threshold)$, where $\threshold \in \mathbb{N}$, then we take the best response to be $(0,0)$ because it is not economically meaningful to have $\threshold > \degree = 0$ 
(it would mean that an agent requires some inputs to produce but does not attempt to acquire any inputs). 

In simulations, we compute the set of strategies that could be best responses as follows:
\begin{align}
\begin{cases}
\{(0,0)\} & \text{if } F= 0 \\
\{
(\lfloor \gamma \rfloor, \lfloor \gamma \rfloor), 
(\lceil \gamma \rceil, \lceil \gamma \rceil)\} & \text{if } F=1 \\
\{(\degree, \threshold) \in \mathbb{N}^2 : 
0 \leq \degree \leq \alpha^{-1/{(1-\beta)}}, \text{and either } \\
\degree=\threshold \text{ satisfies\ \eqref{eq:solution_max_m_for_m_equals_tau} or } 
0 < \threshold < \degree < \threshold^\beta / \alpha
\}
& \text{if } 0<F<1
\end{cases}
\label{eq:compute_strategies_that_could_be_best_response}
\end{align}
where 
\begin{align*}
\gamma \equiv \left (\frac{\beta}{\alpha} \right )^{\frac{1}{1-\beta}}.
\end{align*}
Given $(\alpha, \beta, \fracFunctional(t))$, we compute the finite set of strategies that could be a best response [given by expression~\eqref{eq:compute_strategies_that_could_be_best_response}] and among those strategies we select the strategy with the highest utility [given by equation\ \eqref{eq:utility} in the main text].

\section{Downturns in rich, highly functional\\ economies}
\label{appendix:downturns}

Here we show that rich, highly functional economies undergo occasional spikes in dysfunction, akin to economic recessions. \new{These spikes in disruption occur even if there are no exogenous failures (i.e., $\decay = 0$): because agents can economize too much on redundancy, failures can spread.}

\begin{lemma}[Without exogenous failures, a completely functional economy is in steady state]
\new{For $\decay = 0$,} there is a steady state at $\fracFunctional(t) = 1$, at which agents best respond by choosing $\degree^* = \threshold^*$. Furthermore, $\degree^* = \threshold^*$ equals the floor or ceiling of $(\beta / \alpha) ^ {1/(1-\beta)}$ (whichever yields more utility). 
\label{lem:steady_state_F=1}
\end{lemma}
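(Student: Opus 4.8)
The plan is to verify the three assertions in turn, each of which reduces to substituting $\fracFunctional = 1$ and $\decay = 0$ into the defining equations and then performing one elementary scalar optimization. First I would record the key simplification: at $\fracFunctional = 1$ every potential supplier is functional, so $\text{Binomial}(\degree, 1)$ equals $\degree$ deterministically and hence $\chanceSuccessSymbol = \prob{\text{Binomial}(\degree, 1) \geq \threshold} = 1$ for every strategy with $\degree \geq \threshold$. By Lemma~\ref{lem:could_be_best_response}, any best response with positive utility satisfies $\degree \geq \threshold > 0$, so this simplification applies along the relevant range of strategies.

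For the steady-state claim I would plug $\chanceSuccessSymbol = 1$, $\fracFunctional = 1$, and $\decay = 0$ into the ODE~\eqref{eq:ODE_definition}, obtaining $\diff \fracFunctional / \diff t = 1 - 1 \cdot (1 + 0) = 0$; thus $\fracFunctional = 1$ is a fixed point. Next I would identify the best response there. Since $\chanceSuccessSymbol = 1$ for all feasible strategies, the utility~\eqref{eq:utility} collapses to $\utility{\degree}{\threshold}{1}{\alpha}{\beta} = \threshold^\beta - \alpha \degree$, which for each fixed $\threshold$ is strictly decreasing in $\degree$ because $\alpha > 0$. Hence the optimal $\degree$ is the smallest feasible one, $\degree^* = \threshold^*$: when all suppliers are certain to be functional, any redundancy $\degree - \threshold > 0$ is pure wasted cost.

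Finally I would solve the induced one-dimensional problem of maximizing $g(\threshold) := \threshold^\beta - \alpha \threshold$ over the non-negative integers. Relaxing to real $\threshold \geq 0$, the function $g$ is strictly concave on $(0,\infty)$, since $g''(\threshold) = \beta(\beta - 1)\threshold^{\beta - 2} < 0$ for $\beta \in (0,1)$, so it is unimodal with a unique interior maximizer determined by the first-order condition $g'(\threshold) = \beta \threshold^{\beta - 1} - \alpha = 0$, i.e. $\threshold = (\beta / \alpha)^{1/(1-\beta)}$. By unimodality, the integer maximizer must be one of the two integers flanking this point, so $\degree^* = \threshold^*$ is the floor or the ceiling of $(\beta / \alpha)^{1/(1-\beta)}$, whichever yields the larger value of $g$, as claimed.

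The argument is a chain of substitutions followed by a routine scalar optimization, so I do not anticipate a genuine obstacle. The one point requiring care is the reduction from the integer program to the floor/ceiling of the continuous optimizer: this relies on the strict concavity (hence unimodality) of $g$ on the positive reals, which guarantees that $g$ increases up to its interior peak and decreases thereafter, so no integer that is not adjacent to the continuous maximizer can beat both of its neighbors. I would also make explicit at the outset that strategies with $\degree < \threshold$ are excluded by Lemma~\ref{lem:could_be_best_response}; this is precisely what keeps $\chanceSuccessSymbol = 1$ throughout and holds the utility in the clean form $\threshold^\beta - \alpha \degree$ on which everything rests.
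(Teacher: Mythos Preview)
Your proposal is correct and follows essentially the same approach as the paper: observe that $\chanceSuccessSymbol=1$ at $\fracFunctional=1$ for all $\degree\ge\threshold$, deduce $\diff\fracFunctional/\diff t=0$, note that redundancy is pure cost so $\degree^*=\threshold^*$, and then optimize $\threshold^\beta-\alpha\threshold$. Your version is slightly more explicit than the paper's in justifying the floor/ceiling reduction via strict concavity and unimodality of $g$, and in invoking Lemma~\ref{lem:could_be_best_response} to restrict attention to $\degree\ge\threshold$; the paper simply asserts the first-order condition and handles the degenerate case $\threshold^*=0$ separately, which you may wish to mention for completeness.
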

\begin{proof}
In a completely functional economy [with $\fracFunctional(t) = 1$] without exogenous failures \new{($\decay = 0$)}, the ODE for $\fracFunctional(t)$ [from equation\ \eqref{eq:ODE_definition_subequations} in the main text] is 
\begin{align}
\diff \fracFunctional / \diff t = \chanceSuccessSymbol - 1 
\label{eq:ODEatF=1}
\end{align}
for $\threshold > 0$ (and $\diff \fracFunctional / \diff t := 0$ for $\threshold = 0$). 
Because $\fracFunctional = 1$, agents certainly succeed in producing (i.e., $\chanceSuccessSymbol = 1$) as long as $\degree \geq \threshold > 0$. 
If $\chanceSuccessSymbol = 1$, then choosing $\degree$ to be greater than $\threshold$ presents only costs and no benefits, so agents choose $\degree^* = \threshold^*$. 
By computing the value of $\threshold$ that maximises the utility function $\threshold^\beta - \alpha \threshold$, we see that the best response at $\fracFunctional(t) = 1$ has $\degree^* = \threshold^*$ equal to either the floor or ceiling of $(\beta / \alpha) ^ {1/(1-\beta)}$ (whichever yields more utility). 

In summary, if $\degree^* = \threshold^* = 0$ then $\diff \fracFunctional / \diff t = 0$ [by definition; see the text after equation\ \eqref{eq:ODE_definition_subequations} in the main text], and if $\degree^* = \threshold^* > 0$ then $\chanceSuccessSymbol = 1$ and therefore $\diff \fracFunctional / \diff t = 0$ from equation\ \eqref{eq:ODEatF=1}. $\hfill \blacksquare$
\end{proof}

Next we show in Lemma\ \ref{lem:F=1noredundancy} that the steady state at $\fracFunctional(t) = 1$ is not stable to perturbations in $\fracFunctional(t)$ \new{because agents do not have any redundancy (i.e., $\degree^* - \threshold^* = 0$) for $\fracFunctional$ just below $1$.}

\begin{lemma}[The best response has no redundancy at and just below $\fracFunctional(t) = 1$]
Assume that $\alpha > 0$, $\beta \in (0,1)$, and $\decay \geq 0$.  
There exists $\bar \fracFunctional \in (0,1)$ such that the best response $(\degree^*, \threshold^*)$ satisfies $\degree^* = \threshold^*$ for $\fracFunctional(t) \in (\bar \fracFunctional , 1]$. 
\label{lem:F=1noredundancy}
\end{lemma}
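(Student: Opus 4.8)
The plan is to reduce the optimization to a comparison over a fixed finite set of strategies and then exploit continuity in $\fracFunctional$. First I would note that the utility \eqref{eq:utility} is independent of $\decay$, so the hypothesis $\decay \geq 0$ is immaterial and the best response depends only on $(\alpha, \beta, \fracFunctional)$. The key structural fact I would invoke is the $\fracFunctional$-independent finiteness of the candidate set: by Lemma~\ref{lem:could_be_best_response} and the corollary that follows it, for every $\fracFunctional \in (0,1]$ any best response either equals $(0,0)$ or satisfies $0 < \threshold \leq \degree \leq M$ with $M := \lceil \alpha^{-1/(1-\beta)} \rceil$ depending only on $\alpha, \beta$. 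Hence all best responses lie in the fixed finite set
\[
\mathcal{S} := \{(0,0)\} \cup \{(\degree,\threshold) \in \mathbb{N}^2 : 0 < \threshold \leq \degree \leq M\},
\]
which does not vary with $\fracFunctional$.

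Next I would partition $\mathcal{S}$ into the diagonal strategies $\mathcal{D} := \{(\threshold,\threshold) \in \mathcal{S}\}$ and the redundant strategies $\mathcal{R} := \{(\degree,\threshold) \in \mathcal{S} : \degree > \threshold\}$, and compare their best attainable utilities at $\fracFunctional = 1$. (If $\mathcal{R} = \emptyset$ the statement is immediate, so assume $\mathcal{R} \neq \emptyset$.) At $\fracFunctional = 1$ every strategy with $\degree \geq \threshold > 0$ succeeds with certainty, so $\utility{\degree}{\threshold}{1}{\alpha}{\beta} = \threshold^\beta - \alpha\degree$. For any $(\degree,\threshold) \in \mathcal{R}$ we have $\degree \geq \threshold + 1$, hence
\[
\utility{\degree}{\threshold}{1}{\alpha}{\beta} \leq \threshold^\beta - \alpha(\threshold+1) = \utility{\threshold}{\threshold}{1}{\alpha}{\beta} - \alpha,
\]
so each redundant strategy is dominated at $\fracFunctional = 1$ by the diagonal strategy with the same $\threshold$. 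Taking maxima gives a strict gap,
\[
\max_{(\degree,\threshold) \in \mathcal{R}} \utility{\degree}{\threshold}{1}{\alpha}{\beta} \leq \max_{(\threshold,\threshold) \in \mathcal{D}} \utility{\threshold}{\threshold}{1}{\alpha}{\beta} - \alpha.
\]

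To finish, I would pass from $\fracFunctional = 1$ to a neighborhood by continuity. For each fixed $(\degree,\threshold)$ the probability $\prob{\text{Binomial}(\degree,\fracFunctional) \geq \threshold}$ is a polynomial in $\fracFunctional$, so $\utility{\degree}{\threshold}{\cdot}{\alpha}{\beta}$ is continuous; since $\mathcal{S}$ is finite, the functions $g_{\mathcal{D}}(\fracFunctional) := \max_{\mathcal{D}} \utilitySymbol$ and $g_{\mathcal{R}}(\fracFunctional) := \max_{\mathcal{R}} \utilitySymbol$ are continuous, being maxima of finitely many continuous functions. Because $g_{\mathcal{D}}(1) - g_{\mathcal{R}}(1) \geq \alpha > 0$, continuity supplies some $\bar\fracFunctional \in (0,1)$ with $g_{\mathcal{D}}(\fracFunctional) > g_{\mathcal{R}}(\fracFunctional)$ for all $\fracFunctional \in (\bar\fracFunctional, 1]$. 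On that interval the maximum of $\utilitySymbol$ over $\mathcal{S}$ is attained only in $\mathcal{D}$, i.e.\ every best response has $\degree^* = \threshold^*$, which is the claim.

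I expect the main obstacle to be the uniformity of $\bar\fracFunctional$ across strategies. For each individual strategy the marginal benefit of one extra input, $\threshold^\beta\big(\prob{\text{Binomial}(\degree,\fracFunctional) \geq \threshold} - \fracFunctional^{\threshold}\big)$, does tend to $0$ as $\fracFunctional \to 1$ while the marginal cost stays $\geq \alpha$, so redundancy becomes suboptimal near $\fracFunctional = 1$; the difficulty is that, a priori, the threshold beyond which a given redundant strategy is dominated could creep toward $1$ as the strategy grows. The $\fracFunctional$-independent bound $M$ from the corollary is precisely what rules this out, collapsing the comparison to finitely many strategies and thereby yielding a single $\bar\fracFunctional$ valid for all of $\mathcal{R}$ at once.
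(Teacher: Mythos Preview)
Your proposal is correct and follows essentially the same strategy as the paper: reduce to the finite, $\fracFunctional$-independent candidate set from Lemma~\ref{lem:could_be_best_response}, observe that diagonal strategies strictly dominate at $\fracFunctional=1$, and extend to a left neighborhood of $1$ by continuity of finitely many polynomials. Your partition into $\mathcal{D}$ versus $\mathcal{R}$ with the explicit gap $g_{\mathcal{D}}(1)-g_{\mathcal{R}}(1)\geq\alpha$ is a slightly tidier bookkeeping device than the paper's rank-based differences $\delta_k$, but the underlying argument is the same.
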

\begin{proof} 
We will prove the claim using the continuity of the utility function $\utility{\degree}{\threshold}{\fracFunctional(t)}{\alpha}{\beta}
$ [defined in equation\ \eqref{eq:utility} in the main text] as a function of $\fracFunctional(t)$. $\utilitySymbol$ is continuous in $\fracFunctional(t)$ because it is a polynomial in $\fracFunctional(t)$ of degree $\degree$. 

From Lemma\ \ref{lem:steady_state_F=1}, we know that for $\fracFunctional(t) = 1$ the best response is $\degree^* = \threshold^*$ and that $\threshold^*$ equals the floor or the ceiling of $(\beta / \alpha) ^ {1/(1-\beta)}$ (whichever yields more utility). We denote this value of $\threshold^*$ by $\phi$. 

From Lemma\ \ref{lem:could_be_best_response}, 
we know that the set of strategies that could be a best response is finite and depends on $\alpha$ and on $\beta$ but not on $\fracFunctional$. 
Denote using $\Phi$ the set of strategies that could be a best response, which is given in Lemma\ \ref{lem:could_be_best_response}. 

Let $\delta_k$ denote the difference between the first-best utility and $k$th-best utility, 
where the $k$-th best utility is defined to be the $k$th-highest utility among the set of strategies $\Phi$ that could be best responses. 

Let $\delta := 1/2 \times \min \{\delta_k : 2 \leq k \leq | \Phi | \}$. We know that $\delta > 0$ because it is a minimum of finitely many positive numbers (because $|\Phi| < \infty$). Because $\utilitySymbol$ is continuous in $\fracFunctional(t)$, there exists $\theta \in (0, 1)$ such that if $\fracFunctional(t) > 1 - \theta$ then 
\begin{align}
| \utility{\degree}{\threshold}{\fracFunctional(t)}{\alpha}{\beta}
  -\utility{\degree}{\threshold}{1}{\alpha}{\beta}
 | < \delta 
\end{align}
for all $(\degree, \threshold) \in \Phi$. 

Now fix $\fracFunctional_0 \in (1-\theta, 1)$ and let $(\degree_0, \threshold_0) \in \Phi$ be one of the possible best responses different from $(\phi, \phi)$ [which is the best response at $\fracFunctional(t) = 1$]. Let $k$ denote the utility-rank of this strategy $(\degree_0, \threshold_0)$ at $\fracFunctional(t) = 1$, meaning that the strategy  $(\degree_0, \threshold_0)$ has the $k$-th highest utility at $\fracFunctional(t) = 1$. Then 
\begin{subequations}
\begin{align}
\utility{\phi}{\phi}{\fracFunctional_0}{\alpha}{\beta}
&> 
\utility{\phi}{\phi}{1}{\alpha}{\beta}
- \delta & \text{by continuity of $\utilitySymbol \left [ \dots, \fracFunctional(t), \dots \right ]$} \\
& \geq 
\utility{\phi}{\phi}{1}{\alpha}{\beta}
- \delta_k / 2 & \text{by definition of $\delta$} \\
& = 
\utility{\degree_0}{\threshold_0}{1}{\alpha}{\beta}
+  \delta_k / 2  & \text{by definition of $\delta_k$}  \\
& \geq 
\utility{\degree_0}{\threshold_0}{1}{\alpha}{\beta}
+ \delta & \text{by definition of $\delta$}  \\
& > 
\utility{\degree_0}{\threshold_0}{\fracFunctional_0}{\alpha}{\beta}
 & \text{by continuity of $\utilitySymbol \left [ \dots, \fracFunctional(t), \dots \right ]$},
\end{align}
\end{subequations}
so $(\phi, \phi)$ is the best response for all $\fracFunctional \in (1-\theta, 1]$. This argument completes the proof with $\bar \fracFunctional := 1-\theta \in (0,1)$. 
$\hfill \blacksquare$
\end{proof}

Finally, we arrive at our main result of this appendix: if the marginal cost of each attempted input ($\alpha$) is small enough, then a sufficiently functional economy undergoes spikes in disruptions, akin to recessions. These spikes in disruption occur even in the absence of exogenous failures ($\decay = 0$).

\begin{theorem}[Highly functional economies undergo spikes in disruption]
If $\alpha < 2^\beta - 1$, $\beta \in (0, 1)$, and $\decay \geq 0$, then there exists $\widetilde \fracFunctional \in (0,1)$ such that $\diff F / \diff t < 0$ for $\fracFunctional \in (\widetilde \fracFunctional, 1)$. 
\label{thm:downturn}
\end{theorem}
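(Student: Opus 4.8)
The plan is to reduce the dynamics near $\fracFunctional=1$ to an explicit scalar ODE and then to exploit the hypothesis $\alpha < 2^\beta - 1$ to pin down the relevant best response. First I would invoke Lemma~\ref{lem:F=1noredundancy}, together with the $(\phi,\phi)$ identification established in its proof (where $\phi$ is the value from Lemma~\ref{lem:steady_state_F=1}): there is a $\theta\in(0,1)$ such that for $\fracFunctional\in(1-\theta,1]$ the best response is $\degree^*=\threshold^*=\phi$, with $\phi$ the integer maximiser of $h(k):=k^\beta-\alpha k$. On this interval the success probability collapses to $\chanceSuccessSymbol=\prob{\text{Binomial}(\phi,\fracFunctional)\geq\phi}=\fracFunctional^{\phi}$, since all $\phi$ of the $\phi$ attempted inputs must be functional, so equation~\eqref{eq:ODE_definition_subequations} becomes the explicit polynomial ODE
\begin{align*}
\diff \fracFunctional / \diff t = \fracFunctional^{\phi} - \fracFunctional(1 + \decay).
\end{align*}

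The crux is to show that $\alpha < 2^\beta - 1$ forces $\phi \geq 2$. Because $\beta\in(0,1)$, the map $h$ is strictly concave on $(0,\infty)$, so its integer first-differences $h(k)-h(k-1)$ are strictly decreasing in $k$; hence the integer maximiser $\phi$ is the largest $k$ for which $h(k)-h(k-1)\geq 0$. I would then compute $h(2)-h(1)=(2^\beta-2\alpha)-(1-\alpha)=2^\beta-1-\alpha$, which is strictly positive precisely when $\alpha<2^\beta-1$. By the monotonicity of the differences this also gives $h(1)-h(0)\geq h(2)-h(1)>0$, so $h(0)<h(1)<h(2)$; the maximum of $h$ therefore exceeds its values at $0$ and $1$ and is attained at some integer $\geq 2$, i.e.\ $\phi\geq 2$.

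With $\phi \geq 2$ in hand, the conclusion is a one-line estimate valid uniformly in $\decay$. For every $\fracFunctional\in(0,1)$ we have $\fracFunctional^{\phi}=\fracFunctional\cdot\fracFunctional^{\phi-1}<\fracFunctional$, since $\fracFunctional^{\phi-1}<1$ (as $\phi-1\geq 1$ and $\fracFunctional<1$), while $\fracFunctional\leq\fracFunctional(1+\decay)$ because $\decay\geq 0$; chaining these yields $\fracFunctional^{\phi}-\fracFunctional(1+\decay)<0$. Restricting to $(1-\theta,1)$, where the ODE takes the displayed form, gives $\diff\fracFunctional/\diff t<0$ throughout, so the theorem holds with $\widetilde\fracFunctional:=1-\theta\in(0,1)$. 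This handles $\decay=0$ and $\decay>0$ simultaneously, which is essential: at $\fracFunctional=1$ one has $\diff\fracFunctional/\diff t=-\decay$, so for $\decay=0$ the endpoint is a steady state and strict negativity can hold only on the open interval---exactly what the statement asserts.

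I expect the main obstacle to be conceptual rather than computational: recognising that $\alpha<2^\beta-1$ is exactly the indifference condition $h(1)=h(2)$ at full reliability, and that nudging the best response from one required input to at least two is precisely what makes the nonlinear return term $\fracFunctional^{\phi}$ fall below the linear replenishment term $\fracFunctional$ just below one. The only point demanding a little care is justifying that the integer maximiser of a concave function jumps past $1$ as soon as the single difference $h(2)-h(1)$ turns positive; the strict monotonicity of the first-differences supplies this cleanly, and I would argue through it directly rather than reasoning about the continuous maximiser $(\beta/\alpha)^{1/(1-\beta)}$.
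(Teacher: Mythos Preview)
Your proposal is correct and follows essentially the same route as the paper: invoke Lemma~\ref{lem:F=1noredundancy} to reduce to the diagonal best response $(\phi,\phi)$ near $\fracFunctional=1$, show $\phi\geq 2$ under the hypothesis $\alpha<2^\beta-1$, and then observe that $\fracFunctional^{\phi}-\fracFunctional(1+\decay)<0$ on $(0,1)$. The only cosmetic difference is that the paper argues $\phi\geq 2$ by directly comparing the utilities of $(0,0)$, $(1,1)$, and $(2,2)$ at $\fracFunctional=1$, whereas you frame the same computation via the strict concavity of $h(k)=k^\beta-\alpha k$ and monotonicity of its first differences; the two arguments are equivalent, and yours is slightly more explicit about why the integer maximiser cannot fall back below $2$.
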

\begin{proof}
From Lemma~\ref{lem:F=1noredundancy} we know that there exists $\bar \fracFunctional \in (0,1)$ such that the best response $(\degree^*, \threshold^*)$ satisfies $\degree^* = \threshold^*$ for $\fracFunctional \in (\bar \fracFunctional,1]$, 
and from Lemma~\ref{lem:steady_state_F=1} we know that this 
$\degree^* = \threshold^*$ equals the floor or ceiling of $(\beta / \alpha) ^ {1/(1-\beta)}$ (whichever yields more utility), which we denote by $\phi$. Note that $\phi$ is nonincreasing in $\alpha$. 

\new{Now we determine how small $\alpha$ needs to be in order for the best response to satisfy $\degree^* = \threshold^* \geq 2$. To do so,} we equate utilities from pairs of strategies:
\begin{itemize}
\item An agent is indifferent between the strategies $(\degree, \threshold) = (1,1)$ and $(0,0)$ when 
$\fracFunctional(t) - \alpha = 0$. 
This observation, together with the observation that $\phi$ is nonincreasing in $\alpha$, implies that for $\fracFunctional(t) = 1$ and $\alpha \geq 1$ the best response is $(0,0)$. 
\item An agent is indifferent between the strategies $(\degree, \threshold) = (1,1)$ and $(2,2)$ when $\fracFunctional^2 2^\beta - 2 \alpha = \fracFunctional(t)-\alpha$, which is a quadratic polynomial in $\fracFunctional(t)$. The positive root is $\fracFunctional_+ := 2^{-(\beta +1)} \left(1 - \sqrt{1 - \alpha  2^{\beta +2}}\right)$. This positive root $\fracFunctional_+$ is equal to $1$ when $\alpha = 2^\beta - 1$, which is smaller than $1$ for any $\beta \in (0, 1)$. Therefore, for $\alpha \in (2^\beta - 1, 1)$ and $\fracFunctional(t) = 1$ the best response is $(\degree^*, \threshold^*) = (1,1)$. Moreover, for $\alpha \in (0, 2^\beta - 1)$ and $\fracFunctional(t) = 1$ the best response $(\degree^*, \threshold^*)$  satisfies $\degree^* = \threshold^* \geq 2$.
\end{itemize}

Finally, note that if $\degree = \threshold \geq 2$, 
then the chance of successfully producing is $\chanceSuccess{\degree}{\threshold}{\fracFunctional(t)} 
= \fracFunctional(t)^{\threshold}$, which 
is less than $\fracFunctional(t)$ for $\fracFunctional(t) \in (0,1)$ because $\threshold \geq 2$. 
\new{In this case, the master equation\ \eqref{eq:ODE_definition}
\begin{align}
\diff \fracFunctional / \diff t = \fracFunctional(t)^{\threshold} - \fracFunctional(t) (1 + \decay)
\end{align}
is negative for all  $0 < \fracFunctional(t) < 1$ and any $\decay \geq 0$. Thus, the claim holds for $\widetilde \fracFunctional := \bar \fracFunctional$.  $\hfill \blacksquare$}
\end{proof}

\new{
\section{Modified model in which agents tend to keep functional suppliers and tend to choose popular suppliers\label{sec:modified_model}}

Here we modify the choice of suppliers in two ways. First, agents tend to keep their suppliers who were functional. Second, they bias their search toward suppliers who already have many customers. (When $i$ requests an input from $j$, we consider $i$ a customer of $j$ until $i$ tries to produce again.) Specifically, when agent $i$ tries to produce, it chooses $\degree$ suppliers from the population with replacement according to the following weights. Agent $j$ with $k$ customers has weight 
\begin{align}
w_j = \stickiness^{\mathcal{S}(j, i)} (1 + k)^{\xi},
\label{eq:weight}
\end{align}
where $\mathcal{S}(j, i)$ is $1$ if $j$ successfully delivered an input to $i$ last time $i$ tried to produce, and otherwise $\mathcal{S}(i, j) = 0$. 
The parameter $\stickiness$ captures how ``sticky'' customer--supplier relationships are, while the ``preferential attachment'' parameter $\xi$ determines how strongly agents tend to choose suppliers who have high out-degree $k$.\footnote{\new{$\xi$ is the parameter of nonlinear preferential attachment\ \cite{Krapivsky2000connectivity}; in that model, $\xi=1$ generates a power-law out-degree distribution. Reference \cite{Fujiwara2010} found a power law out-degree distribution in Japan's national, firm-level input--output network.}} 
Our original model is recovered with $(r, \xi) := (1, 0)$.

Assuming $r > 1$ or $\xi > 0$ generates the same qualitative results insights of the model (the poverty trap, overshooting in complexity, and rise-and-fall of buffers). However, it does generate two interesting effects that we illustrate using numerical simulations. First, with $\stickiness > 1$ and $\xi = 1$, the economy is less likely to fall into the trap. Because the model is stochastic and the simulations have finitely many agents, the economy can fall into the poverty trap (and remain there) even if the master equation\ \eqref{eq:ODE_definition_subequations} predicts growth ($\diff \fracFunctional / \diff t > 0$). With $\stickiness > 1$, agents tend to keep their functional suppliers, so they are more likely to successfully produce. Consequently, functional agents tend to accumulate customers. If we also have preferential attachment ($\xi > 0$), then agents further bias their search for suppliers toward functional agents because having many customers is correlated with being functional.

This biased search for  for functional suppliers has a downside, however. Once the economy becomes complex and highly functional, it can rely on a handful of agents who supply almost all others. These ``hubs'' can become dysfunctional because they need inputs from dysfunctional suppliers or because they are hit by one of the exogenous failures that occur at rate $\decay$. When that happens, the economy is so brittle and so dependent on those hubs that it suffers cascading disruptions and a drastic downturn, as illustrated by the large drops in the time-series $\fracFunctional(t)$ in Supplementary Figure\ \ref{fig:sticky_pa}(b).
}


\begin{figure}[htb]
\begin{center}
\includegraphics{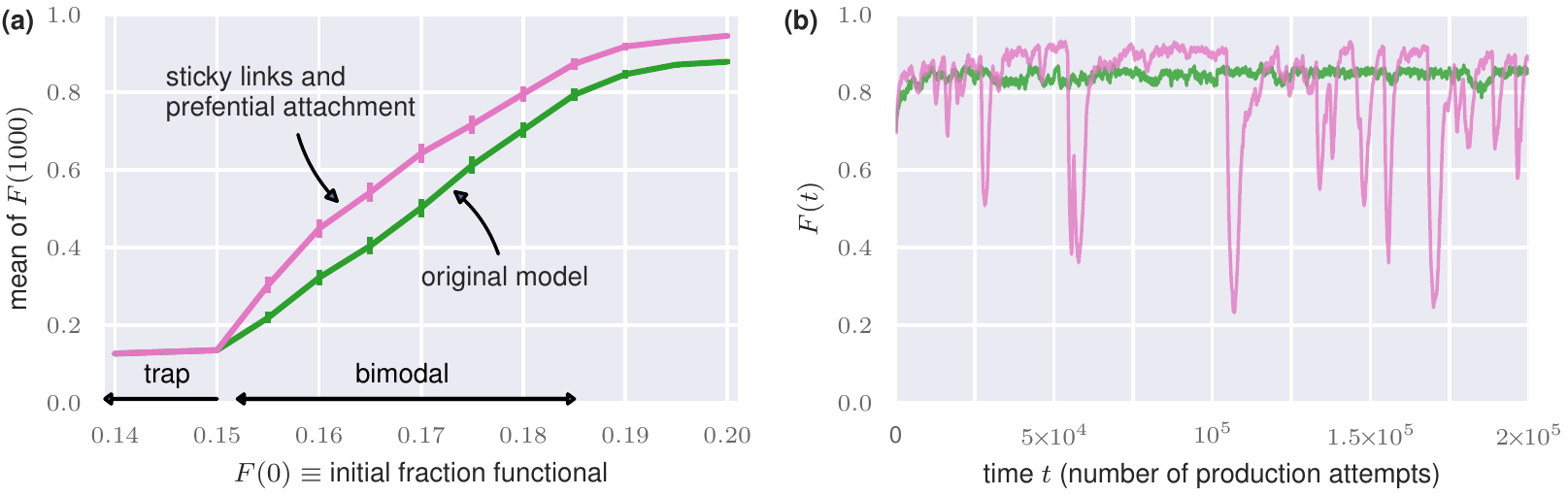}
\caption{
\new{
A tendency to keep functional suppliers ($r>1$) and to choose popular suppliers ($\xi = 1$) makes it (a) easier to escape poverty but (b) once complex and developed, the economy is more fragile.
In both panels, the green curve is the original model [$(r, \xi) = (1, 0)$]; the magenta curve is the ``modified model'' with sticky links ($r=2000$) and linear preferential attachment ($\xi = 1$); and $(\alpha, \beta, \epsilon) = (0.15, 0.4, 10^{-4})$. The economy has $N=200$ agents in (a) and $1000$ in (b).
Panel (a) shows the mean fraction of functional agents in the long-run $\fracFunctional(10^3)$ as a function of the initial condition $\fracFunctional(0)$; error bars denote $2 \times \text{s.e.m.}$ for $10^3$ simulations; 
the mean $\fracFunctional(10^3)$ is significantly larger in the modified model for all $F(0) \in [0.155, 0.2]$ ($p$-value $< 10^{-5}$, two-sided Mann-Whitney $U$ test). 
Panel (b) shows two representative time-series $\fracFunctional(t)$ with $\fracFunctional(0) = 0.7$. 
The standard deviation of $F(t)$ is significantly larger in the modified model [on average, $8.6$ times larger; $p$-value $= 10^{-251}$ in a two-sided \emph{t}-test with $200$ replicas of the simulations in (b)].
}}
\label{fig:sticky_pa}
\end{center}
\end{figure}


\begin{thebibliography}{10}
\expandafter\ifx\csname url\endcsname\relax
  \def\url#1{\texttt{#1}}\fi
\expandafter\ifx\csname urlprefix\endcsname\relax\def\urlprefix{URL }\fi
\providecommand{\bibinfo}[2]{#2}
\providecommand{\eprint}[2][]{\url{#2}}

\bibitem{schumpeter1934theory}
\bibinfo{author}{Schumpeter, J.~A.}
\newblock \emph{\bibinfo{title}{The Theory of Economic Development: An Inquiry
  Into Profits, Capital, Credit, Interest, and the Business Cycle}},
  vol.~\bibinfo{volume}{55} of \emph{\bibinfo{series}{Economics Third World
  studies}} (\bibinfo{publisher}{Transaction Publishers},
  \bibinfo{year}{1934}).

\bibitem{romer1987growth}
\bibinfo{author}{Romer, P.~M.}
\newblock \bibinfo{title}{Growth based on increasing returns due to
  specialization}.
\newblock \emph{\bibinfo{journal}{The American Economic Review}}
  \textbf{\bibinfo{volume}{77}}, \bibinfo{pages}{56--62}
  (\bibinfo{year}{1987}).

\bibitem{Allcott2014}
\bibinfo{author}{Allcott, H.}, \bibinfo{author}{Collard-Wexler, A.} \&
  \bibinfo{author}{O'Connell, S.~D.}
\newblock \bibinfo{title}{How do electricity shortages affect industry?
  evidence from india}.
\newblock \emph{\bibinfo{journal}{The American Economic Review}}
  \textbf{\bibinfo{volume}{106}}, \bibinfo{pages}{587--624}
  (\bibinfo{year}{2016}).
\newblock
  \urlprefix\url{http://www.ingentaconnect.com/content/aea/aer/2016/00000106/00000003/art00005}.

\bibitem{EnterpriseSurvey}
\bibinfo{author}{{The World Bank}}.
\newblock \bibinfo{title}{{Enterprise Survey}} (\bibinfo{year}{2015}).
\newblock \urlprefix\url{http://www.enterprisesurveys.org/data/}.
\newblock \bibinfo{note}{Accessed October 15, 2015}.

\bibitem{Chaudhury2006}
\bibinfo{author}{Chaudhury, N.}, \bibinfo{author}{Hammer, J.},
  \bibinfo{author}{Kremer, M.}, \bibinfo{author}{Muralidharan, K.} \&
  \bibinfo{author}{Rogers, F.~H.}
\newblock \bibinfo{title}{{Missing in action: teacher and health worker absence
  in developing countries}}.
\newblock \emph{\bibinfo{journal}{The Journal of Economic Perspectives}}
  \textbf{\bibinfo{volume}{20}}, \bibinfo{pages}{91--116}
  (\bibinfo{year}{2006}).
\newblock
  \urlprefix\url{https://www.aeaweb.org/articles.php?doi=10.1257/089533006776526058}.

\bibitem{Caselli2005}
\bibinfo{author}{Caselli, F.}
\newblock \bibinfo{title}{Accounting for cross-country income differences}.
\newblock In \bibinfo{editor}{Aghion, P.} \& \bibinfo{editor}{Durlauf, S.~N.}
  (eds.) \emph{\bibinfo{booktitle}{Handbook of Economic Growth}},
  vol.~\bibinfo{volume}{1}, \bibinfo{pages}{679--741}
  (\bibinfo{publisher}{Elsevier}, \bibinfo{year}{2005}).

\bibitem{Hidalgo2007}
\bibinfo{author}{Hidalgo, C.~A.}, \bibinfo{author}{Klinger, B.},
  \bibinfo{author}{Barab{\'a}si, A.~L.} \& \bibinfo{author}{Hausmann, R.}
\newblock \bibinfo{title}{{The Product Space Conditions the Development of
  Nations}}.
\newblock \emph{\bibinfo{journal}{Science}} \textbf{\bibinfo{volume}{317}},
  \bibinfo{pages}{482--487} (\bibinfo{year}{2007}).

\bibitem{Hidalgo2009}
\bibinfo{author}{Hidalgo, C.~A.} \& \bibinfo{author}{Hausmann, R.}
\newblock \bibinfo{title}{{The building blocks of economic complexity}}.
\newblock \emph{\bibinfo{journal}{Proceedings of the National Academy of
  Sciences}} \textbf{\bibinfo{volume}{106}}, \bibinfo{pages}{10570--10575}
  (\bibinfo{year}{2009}).

\bibitem{ECI2011}
\bibinfo{author}{{Atlas of Economic Complexity}}.
\newblock \bibinfo{title}{{Economic Complexity Index}} (\bibinfo{year}{2011}).
\newblock
  \urlprefix\url{http://atlas.cid.harvard.edu/rankings/country/download/}.
\newblock \bibinfo{note}{Accessed November 8, 2015}.

\bibitem{AtlasEconomicComplexity}
\bibinfo{author}{Hausmann, R.}, \bibinfo{author}{Hidalgo, C.~A.},
  \bibinfo{author}{Bustos, S.}, \bibinfo{author}{Coscia, M.} \&
  \bibinfo{author}{Simoes, A.}
\newblock \emph{\bibinfo{title}{The Atlas of Economic Complexity: Mapping Paths
  to Prosperity}} (\bibinfo{publisher}{MIT Press}, \bibinfo{address}{Cambridge,
  MA}, \bibinfo{year}{2014}).
\newblock \urlprefix\url{https://books.google.com/books?id=JZ6NAgAAQBAJ}.

\bibitem{GDPperCapitaWorldBank}
\bibinfo{author}{{World Bank}}.
\newblock \bibinfo{title}{{GDP per capita (in current US\$) for most recent
  year available}} (\bibinfo{year}{2016}).
\newblock \urlprefix\url{http://data.worldbank.org/indicator/NY.GDP.PCAP.CD}.
\newblock \bibinfo{note}{Downloaded February 15, 2016}.

\bibitem{NaturalDisasterRiskIndex}
\bibinfo{author}{Hilft, B.~E.}
\newblock \bibinfo{title}{Worldriskreport}.
\newblock \bibinfo{type}{Tech. Rep.}, \bibinfo{institution}{United Nations
  University Institute for Environment and Human Security}
  (\bibinfo{year}{2014}).
\newblock
  \urlprefix\url{http://i.unu.edu/media/ehs.unu.edu/news/4070/11895.pdf}.
\newblock \bibinfo{note}{Data ``WorldRiskIndex'' from pages 64--66}.

\bibitem{AdultMortalityRate}
\bibinfo{author}{{World Health Organization}}.
\newblock \bibinfo{title}{Adult mortality rate} (\bibinfo{year}{2013}).
\newblock \urlprefix\url{http://apps.who.int/gho/data/node.main.11?lang=en}.
\newblock \bibinfo{note}{Accessed February 15, 2016}.

\bibitem{Fortune2011}
\bibinfo{author}{Powell, B.}
\newblock \bibinfo{title}{The global supply chain: So very fragile}.
\newblock \emph{\bibinfo{journal}{Fortune}}  (\bibinfo{year}{2011}).
\newblock
  \urlprefix\url{http://fortune.com/2011/12/12/the-global-supply-chain-so-very-fragile/}.

\bibitem{Punter2013}
\bibinfo{author}{Punter, A.}
\newblock \bibinfo{title}{{Supply Chain Failures}}.
\newblock \bibinfo{type}{Tech. Rep.}, \bibinfo{institution}{Association of
  Insurance and Risk Managers in Industry and Commerce (Airmic)},
  \bibinfo{address}{London, UK} (\bibinfo{year}{2013}).
\newblock
  \urlprefix\url{https://www.riskmethods.net/resources/research/supply_chain_failures_2013_final_web.pdf}.

\bibitem{Ciccone2002}
\bibinfo{author}{Ciccone, A.}
\newblock \bibinfo{title}{{Input Chains and Industrialization}}.
\newblock \emph{\bibinfo{journal}{The Review of Economic Studies}}
  \textbf{\bibinfo{volume}{69}}, \bibinfo{pages}{565--587}
  (\bibinfo{year}{2002}).
\newblock \urlprefix\url{http://www.jstor.org/stable/1556710}.

\bibitem{Jones2011}
\bibinfo{author}{Jones, C.~I.}
\newblock \bibinfo{title}{{Intermediate Goods and Weak Links in the Theory of
  Economic Development}}.
\newblock \emph{\bibinfo{journal}{American Economic Journal: Macroeconomics}}
  \textbf{\bibinfo{volume}{3}}, \bibinfo{pages}{1--28} (\bibinfo{year}{2011}).

\bibitem{Jones2011Misallocation}
\bibinfo{author}{Jones, C.~I.}
\newblock \bibinfo{title}{{Misallocation, economic growth, and input-output
  economics}}.
\newblock In \emph{\bibinfo{booktitle}{10th World Congress of the Econometric
  Society}}, \bibinfo{pages}{47} (\bibinfo{year}{2011}).

\bibitem{Acemoglu2012}
\bibinfo{author}{Acemoglu, D.}, \bibinfo{author}{Carvalho, V.~M.},
  \bibinfo{author}{Ozdaglar, A.} \& \bibinfo{author}{Tahbaz-Salehi, A.}
\newblock \bibinfo{title}{{The Network Origins of Aggregate Fluctuations}}.
\newblock \emph{\bibinfo{journal}{Econometrica: Journal of the Econometric
  Society}} \textbf{\bibinfo{volume}{80}}, \bibinfo{pages}{1977--2016}
  (\bibinfo{year}{2012}).

\bibitem{Kremer1993}
\bibinfo{author}{Kremer, M.}
\newblock \bibinfo{title}{{The O-Ring Theory of Economic Development}}.
\newblock \emph{\bibinfo{journal}{The Quarterly Journal of Economics}}
  \textbf{\bibinfo{volume}{108}}, \bibinfo{pages}{551--575}
  (\bibinfo{year}{1993}).

\bibitem{Battiston2007}
\bibinfo{author}{Battiston, S.}, \bibinfo{author}{Delli~Gatti, D.},
  \bibinfo{author}{Gallegati, M.}, \bibinfo{author}{Greenwald, B.} \&
  \bibinfo{author}{Stiglitz, J.~E.}
\newblock \bibinfo{title}{{Credit chains and bankruptcy propagation in
  production networks}}.
\newblock \emph{\bibinfo{journal}{Journal of Economic Dynamics and Control}}
  \textbf{\bibinfo{volume}{31}}, \bibinfo{pages}{2061--2084}
  (\bibinfo{year}{2007}).

\bibitem{Weisbuch2007}
\bibinfo{author}{Weisbuch, G.} \& \bibinfo{author}{Battiston, S.}
\newblock \bibinfo{title}{{From production networks to geographical
  economics}}.
\newblock \emph{\bibinfo{journal}{Journal of Economic Behavior {\&}
  Organization}} \textbf{\bibinfo{volume}{64}}, \bibinfo{pages}{448--469}
  (\bibinfo{year}{2007}).

\bibitem{Mizgier2012}
\bibinfo{author}{Mizgier, K.~J.}, \bibinfo{author}{Wagner, S.~M.} \&
  \bibinfo{author}{Holyst, J.~A.}
\newblock \bibinfo{title}{{Modeling defaults of companies in multi-stage supply
  chain networks}}.
\newblock \emph{\bibinfo{journal}{Intern. Journal of Production Economics}}
  \textbf{\bibinfo{volume}{135}}, \bibinfo{pages}{14--23}
  (\bibinfo{year}{2012}).

\bibitem{Henriet2012}
\bibinfo{author}{Henriet, F.}, \bibinfo{author}{Hallegatte, S.} \&
  \bibinfo{author}{Tabourier, L.}
\newblock \bibinfo{title}{{Firm-network characteristics and economic robustness
  to~natural disasters}}.
\newblock \emph{\bibinfo{journal}{Journal of Economic Dynamics and Control}}
  \textbf{\bibinfo{volume}{36}}, \bibinfo{pages}{150--167}
  (\bibinfo{year}{2012}).

\bibitem{Levine2012}
\bibinfo{author}{Levine, D.~K.}
\newblock \bibinfo{title}{{Production chains}}.
\newblock \emph{\bibinfo{journal}{Review of Economic Dynamics}}
  \textbf{\bibinfo{volume}{15}}, \bibinfo{pages}{271--282}
  (\bibinfo{year}{2012}).

\bibitem{Contreras2014}
\bibinfo{author}{Contreras, M. G.~A.} \& \bibinfo{author}{Fagiolo, G.}
\newblock \bibinfo{title}{{Propagation of economic shocks in input-output
  networks: A cross-country analysis}}.
\newblock \emph{\bibinfo{journal}{Physical Review E}}
  \textbf{\bibinfo{volume}{90}}, \bibinfo{pages}{062812}
  (\bibinfo{year}{2014}).

\bibitem{Hendricks2003}
\bibinfo{author}{Hendricks, K.~B.} \& \bibinfo{author}{Singhal, V.~R.}
\newblock \bibinfo{title}{{The effect of supply chain glitches on shareholder
  wealth}}.
\newblock \emph{\bibinfo{journal}{Journal of Operations Management}}
  \textbf{\bibinfo{volume}{21}}, \bibinfo{pages}{501--522}
  (\bibinfo{year}{2003}).

\bibitem{Hendricks2005}
\bibinfo{author}{Hendricks, K.~B.} \& \bibinfo{author}{Singhal, V.~R.}
\newblock \bibinfo{title}{{An empirical analysis of the effect of supply chain
  disruptions on long-run stock price performance and equity risk of the
  firm}}.
\newblock \emph{\bibinfo{journal}{Production and Operations Management}}
  \textbf{\bibinfo{volume}{14}}, \bibinfo{pages}{35--52}
  (\bibinfo{year}{2005}).

\bibitem{Barrot2014}
\bibinfo{author}{Barrot, J.-N.} \& \bibinfo{author}{Sauvagnat, J.}
\newblock \bibinfo{title}{{Input Specificity and the Propagation of
  Idiosyncratic Shocks in Production Networks}} (\bibinfo{year}{2014}).
\newblock \urlprefix\url{http://www.ssrn.com/abstract=2427421}.
\newblock \bibinfo{note}{Working paper}.

\bibitem{Wang2015}
\bibinfo{author}{Wang, Y.~I.}, \bibinfo{author}{Li, J.} \&
  \bibinfo{author}{Anupindi, R.}
\newblock \bibinfo{title}{{Risky Suppliers or Risky Supply Chains? An Empirical
  Analysis of Sub-Tier Supply Network Structure on Firm Performance in the
  High-Tech Sector}}.
\newblock \emph{\bibinfo{journal}{SSRN Electronic Journal}}
  (\bibinfo{year}{2015}).
\newblock \urlprefix\url{http://www.ssrn.com/abstract=2705654}.

\bibitem{Aoki2006_Reconstructing_book}
\bibinfo{author}{Aoki, M.} \& \bibinfo{author}{Yoshikawa, H.}
\newblock \emph{\bibinfo{title}{Reconstructing Macroeconomics: A Perspective
  from Statistical Physics and Combinatorial Stochastic Processes}}.
\newblock Japan-US Center UFJ Bank Monographs on International Financial
  Markets (\bibinfo{publisher}{Cambridge University Press},
  \bibinfo{address}{New York, NY}, \bibinfo{year}{2006}).
\newblock \urlprefix\url{https://books.google.com/books?id=l-aMAgAAQBAJ}.

\bibitem{Carvalho2014_Perspective}
\bibinfo{author}{Carvalho, V.~M.}
\newblock \bibinfo{title}{{From Micro to Macro via Production Networks}}.
\newblock \emph{\bibinfo{journal}{The Journal of Economic Perspectives}}
  \textbf{\bibinfo{volume}{28}}, \bibinfo{pages}{23--48}
  (\bibinfo{year}{2014}).
\newblock \urlprefix\url{http://pubs.aeaweb.org/doi/abs/10.1257/jep.28.4.23}.

\bibitem{Duan:2011fy}
\bibinfo{author}{Duan, W.-Q.}
\newblock \bibinfo{title}{{Modelling the Evolution of National Economies Based
  on Input{\textendash}Output Networks}}.
\newblock \emph{\bibinfo{journal}{Computational Economics}}
  \textbf{\bibinfo{volume}{39}}, \bibinfo{pages}{145--155}
  (\bibinfo{year}{2011}).

\bibitem{Oberfield2012}
\bibinfo{author}{Oberfield, E.}
\newblock \bibinfo{title}{{Business Networks, Production Chains, and
  Productivity: A Theory of Input-Output Architecture}} (\bibinfo{year}{2012}).
\newblock \urlprefix\url{http://ssrn.com/abstract=1967148}.
\newblock \bibinfo{note}{FRB of Chicago Working Paper No. 2011-12}.

\bibitem{Carvalho2014}
\bibinfo{author}{Carvalho, V.~M.} \& \bibinfo{author}{Voigtl{\"a}nder, N.}
\newblock \bibinfo{title}{{Input Diffusion and the Evolution of Production
  Networks}} (\bibinfo{year}{2014}).
\newblock \urlprefix\url{http://www.nber.org/papers/w20025}.
\newblock \bibinfo{note}{NBER Working Paper No. 20025}.

\bibitem{Tomlin2006}
\bibinfo{author}{Tomlin, B.}
\newblock \bibinfo{title}{{On the Value of Mitigation and Contingency
  Strategies for Managing Supply Chain Disruption Risks}}.
\newblock \emph{\bibinfo{journal}{Management Science}}
  \textbf{\bibinfo{volume}{52}}, \bibinfo{pages}{639--657}
  (\bibinfo{year}{2006}).

\bibitem{Aydin2011}
\bibinfo{author}{Aydin, G.}, \bibinfo{author}{Babich, V.},
  \bibinfo{author}{Beil, D.} \& \bibinfo{author}{Yang, Z.}
\newblock \bibinfo{title}{{Decentralized Supply Risk Management}}.
\newblock In \emph{\bibinfo{booktitle}{The Handbook of Integrated Risk
  Management in Global Supply Chains}}, \bibinfo{pages}{387--424}
  (\bibinfo{publisher}{John Wiley {\&} Sons, Inc.}, \bibinfo{address}{Hoboken,
  NJ, USA}, \bibinfo{year}{2011}).
\newblock \urlprefix\url{http://doi.wiley.com/10.1002/9781118115800.ch14}.

\bibitem{Bimpikis:2014vt}
\bibinfo{author}{Bimpikis, K.}, \bibinfo{author}{Fearing, D.} \&
  \bibinfo{author}{Tahbaz~Salehi, A.}
\newblock \bibinfo{title}{{Multi-sourcing and miscoordination in supply chain
  networks}} (\bibinfo{year}{2014}).
\newblock
  \urlprefix\url{http://web.stanford.edu/~kostasb/publications/multisourcing_miscoordination.pdf}.
\newblock \bibinfo{note}{Working paper}.

\bibitem{Ang:2014hd}
\bibinfo{author}{Ang, E.}, \bibinfo{author}{Iancu, D.~A.} \&
  \bibinfo{author}{Swinney, R.}
\newblock \bibinfo{title}{{Disruption Risk and Optimal Sourcing in Multi-tier
  Supply Networks}} (\bibinfo{year}{2014}).
\newblock
  \urlprefix\url{http://www.ideasforleaders.com/ideas/supply-chain-networks-and-csr-sourcing}.
\newblock \bibinfo{note}{Working paper}.

\bibitem{Bakshi:2015wl}
\bibinfo{author}{Bakshi, N.} \& \bibinfo{author}{Mohan, S.}
\newblock \bibinfo{title}{{Mitigating disruption cascades in supply networks}}
  (\bibinfo{year}{2015}).
\newblock \urlprefix\url{http://faculty.london.edu/nbakshi/Nw.pdf}.
\newblock \bibinfo{note}{Working paper}, \eprint{14919920635029043942}.

\bibitem{Murphy1989}
\bibinfo{author}{Murphy, K.~M.}, \bibinfo{author}{Shleifer, A.} \&
  \bibinfo{author}{Vishny, R.~W.}
\newblock \bibinfo{title}{{Industrialization and the Big Push}}.
\newblock \emph{\bibinfo{journal}{Journal of Political Economy}}
  \textbf{\bibinfo{volume}{97}}, \bibinfo{pages}{1003} (\bibinfo{year}{1989}).

\bibitem{Garb2014}
\bibinfo{author}{Garb, Y.} \& \bibinfo{author}{Friedlander, L.}
\newblock \bibinfo{title}{{From transfer to translation: Using systemic
  understandings of technology to understand drip irrigation uptake}}.
\newblock \emph{\bibinfo{journal}{Agricultural Systems}}
  \textbf{\bibinfo{volume}{128}}, \bibinfo{pages}{13--24}
  (\bibinfo{year}{2014}).

\bibitem{Bold2015}
\bibinfo{author}{Bold, T.}, \bibinfo{author}{Kaizzi, K.},
  \bibinfo{author}{Svensson, J.} \& \bibinfo{author}{Yanagizawa-Drott, D.}
\newblock \bibinfo{title}{Low quality, low returns, low adoption: Evidence from
  the market for fertilizer and hybrid seed in uganda}.
\newblock \bibinfo{type}{CEPR Discussion Papers} \bibinfo{number}{10743},
  \bibinfo{institution}{C.E.P.R. Discussion Papers} (\bibinfo{year}{2015}).
\newblock \urlprefix\url{http://EconPapers.repec.org/RePEc:cpr:ceprdp:10743}.

\bibitem{InternetKiosksIndia}
\bibinfo{author}{Sanjay, A.~K.} \& \bibinfo{author}{Gupta, V.}
\newblock \bibinfo{title}{Gyandoot: Trying to improve government services for
  rural citizens in {India}}.
\newblock \bibinfo{type}{Tech. Rep.} \bibinfo{number}{11},
  \bibinfo{institution}{eGovernment for Development} (\bibinfo{year}{2003}).
\newblock
  \urlprefix\url{http://www.egov4dev.org/transparency/case/gyandoot.shtml}.

\bibitem{UrnBook}
\bibinfo{author}{Johnson, N.} \& \bibinfo{author}{Kotz, S.}
\newblock \emph{\bibinfo{title}{Urn models and their application: an approach
  to modern discrete probability theory}}.
\newblock Wiley Series in Probability and Statistics: Applied Probability and
  Statistics Section Series (\bibinfo{publisher}{Wiley}, \bibinfo{year}{1977}).
\newblock \urlprefix\url{http://books.google.com/books?id=ZBfvAAAAMAAJ}.

\bibitem{Woodbury1949}
\bibinfo{author}{Woodbury, M.~A.}
\newblock \bibinfo{title}{On a probability distribution}.
\newblock \emph{\bibinfo{journal}{Ann. Math. Statist.}}
  \textbf{\bibinfo{volume}{20}}, \bibinfo{pages}{311--313}
  (\bibinfo{year}{1949}).
\newblock \urlprefix\url{http://dx.doi.org/10.1214/aoms/1177730043}.

\bibitem{MasterEquations}
\bibinfo{author}{Toral, R.} \& \bibinfo{author}{Colet, P.}
\newblock \emph{\bibinfo{title}{Stochastic Numerical Methods}},
  chap.~\bibinfo{chapter}{8}, \bibinfo{pages}{235--260}
  (\bibinfo{publisher}{Wiley-VCH Verlag GmbH \& Co. KGaA},
  \bibinfo{year}{2014}).
\newblock \urlprefix\url{http://dx.doi.org/10.1002/9783527683147.ch8}.

\bibitem{Helbing2010}
\bibinfo{author}{Helbing, D.}
\newblock \emph{\bibinfo{title}{Quantitative Sociodynamics: Stochastic Methods
  and Models of Social Interaction Processes}}
  (\bibinfo{publisher}{Springer-Verlag Berlin Heidelberg},
  \bibinfo{year}{2010}), \bibinfo{edition}{2} edn.

\bibitem{Aoki1998_NewApproaches_book}
\bibinfo{author}{Aoki, M.}
\newblock \emph{\bibinfo{title}{New Approaches to Macroeconomic Modeling:
  Evolutionary Stochastic Dynamics, Multiple Equilibria, and Externalities as
  Field Effects}}.
\newblock Evolutionary Stochastic Dynamics (\bibinfo{publisher}{Cambridge
  University Press}, \bibinfo{year}{1998}).
\newblock \urlprefix\url{https://books.google.com/books?id=tCmXHVDcK48C}.

\bibitem{Aoki2004_modeling_book}
\bibinfo{author}{Aoki, M.}
\newblock \emph{\bibinfo{title}{Modeling Aggregate Behavior and Fluctuations in
  Economics}} (\bibinfo{publisher}{Cambridge University Press},
  \bibinfo{address}{New York, NY}, \bibinfo{year}{2004}).
\newblock
  \urlprefix\url{http://EconPapers.repec.org/RePEc:cup:cbooks:9780521606196}.

\bibitem{Granovetter1978}
\bibinfo{author}{Granovetter, M.}
\newblock \bibinfo{title}{{Threshold models of collective behavior}}.
\newblock \emph{\bibinfo{journal}{American Journal of Sociology}}
  \textbf{\bibinfo{volume}{83}}, \bibinfo{pages}{1420--1443}
  (\bibinfo{year}{1978}).
\newblock \urlprefix\url{http://www.jstor.org/stable/10.2307/2778111}.

\bibitem{Watts2002}
\bibinfo{author}{Watts, D.~J.}
\newblock \bibinfo{title}{{A simple model of global cascades on random
  networks}}.
\newblock \emph{\bibinfo{journal}{Proceedings of the National Academy of
  Sciences of the United States of America}} \textbf{\bibinfo{volume}{99}},
  \bibinfo{pages}{5766--5771} (\bibinfo{year}{2002}).

\bibitem{Jackson2007}
\bibinfo{author}{Jackson, M.~O.} \& \bibinfo{author}{Yariv, L.}
\newblock \bibinfo{title}{{Diffusion of behavior and equilibrium properties in
  network games}}.
\newblock \emph{\bibinfo{journal}{The American Economic Review}}
  \textbf{\bibinfo{volume}{97}}, \bibinfo{pages}{92--98}
  (\bibinfo{year}{2007}).
\newblock \urlprefix\url{http://www.jstor.org/stable/30034427}.

\bibitem{Ethier1982}
\bibinfo{author}{Ethier, W.~J.}
\newblock \bibinfo{title}{National and international returns to scale in the
  modern theory of international trade}.
\newblock \emph{\bibinfo{journal}{The American Economic Review}}
  \textbf{\bibinfo{volume}{72}}, \bibinfo{pages}{389--405}
  (\bibinfo{year}{1982}).
\newblock \urlprefix\url{http://www.jstor.org/stable/1831539}.

\bibitem{Romer1990}
\bibinfo{author}{Romer, P.~M.}
\newblock \bibinfo{title}{Endogenous technological change}.
\newblock \emph{\bibinfo{journal}{Journal of Political Economy}}
  \textbf{\bibinfo{volume}{98}}, \bibinfo{pages}{S71--S102}
  (\bibinfo{year}{1990}).
\newblock \urlprefix\url{http://www.jstor.org/stable/2937632}.

\bibitem{ProvitiReport}
\bibinfo{author}{Proviti}.
\newblock \bibinfo{title}{Managing supply chain disruption risk}
  (\bibinfo{year}{2011}).
\newblock
  \urlprefix\url{http://www.protiviti.com/ja-JP/Downloads/RiskOversight_vol19_E.pdf}.

\bibitem{SupplyChainDisruptionsHBR2014}
\bibinfo{author}{Simchi-Levi, D.}, \bibinfo{author}{Schmidt, W.} \&
  \bibinfo{author}{Wei, Y.}
\newblock \bibinfo{title}{From superstorms to factory fires: Managing
  unpredictable supply-chain disruptions}.
\newblock \emph{\bibinfo{journal}{Harvard Business Review}}
  (\bibinfo{year}{2014}).
\newblock
  \urlprefix\url{https://hbr.org/2014/01/from-superstorms-to-factory-fires-managing-unpredictable-supply-chain-disruptions}.

\bibitem{Tang2006}
\bibinfo{author}{Tang, C.~S.}
\newblock \bibinfo{title}{Robust strategies for mitigating supply chain
  disruptions}.
\newblock \emph{\bibinfo{journal}{International Journal of Logistics: Research
  and Applications}} \textbf{\bibinfo{volume}{9}}, \bibinfo{pages}{33--45}
  (\bibinfo{year}{2006}).

\bibitem{Bak1993}
\bibinfo{author}{Bak, P.}, \bibinfo{author}{Chen, K.},
  \bibinfo{author}{Scheinkman, J.} \& \bibinfo{author}{Woodford, M.}
\newblock \bibinfo{title}{Aggregate fluctuations from independent sectoral
  shocks: self-organized criticality in a model of production and inventory
  dynamics}.
\newblock \emph{\bibinfo{journal}{Ricerche Economiche}}
  \textbf{\bibinfo{volume}{47}}, \bibinfo{pages}{3--30} (\bibinfo{year}{1993}).
\newblock
  \urlprefix\url{http://www.sciencedirect.com/science/article/pii/003550549390023V}.

\bibitem{Gabaix2011}
\bibinfo{author}{Gabaix, X.}
\newblock \bibinfo{title}{{The Granular Origins of Aggregate Fluctuations}}.
\newblock \emph{\bibinfo{journal}{Econometrica: Journal of the Econometric
  Society}} \textbf{\bibinfo{volume}{79}}, \bibinfo{pages}{733--772}
  (\bibinfo{year}{2011}).

\bibitem{Burlon2012}
\bibinfo{author}{Burlon, L.}
\newblock \bibinfo{title}{{How Do Aggregate Fluctuations Depend on the Network
  Structure of the Economy?}}
\newblock \emph{\bibinfo{journal}{SSRN Electronic Journal}}
  (\bibinfo{year}{2012}).
\newblock \urlprefix\url{http://www.ssrn.com/abstract=2028093}.

\bibitem{Baqaee2015}
\bibinfo{author}{Baqaee, D.~R.}
\newblock \bibinfo{title}{{Cascading Failures in Production Networks}}
  (\bibinfo{year}{2015}).
\newblock
  \urlprefix\url{https://dl.dropboxusercontent.com/u/10253262/Cascades_Live.pdf}.
\newblock \bibinfo{note}{Working paper}.

\bibitem{Francois2005}
\bibinfo{author}{Francois, P.} \& \bibinfo{author}{Zabojnik, J.}
\newblock \bibinfo{title}{{Trust, Social Captial, and Economic Development}}.
\newblock \emph{\bibinfo{journal}{Journal of the European Economic
  Association}} \textbf{\bibinfo{volume}{3}}, \bibinfo{pages}{51--94}
  (\bibinfo{year}{2005}).

\bibitem{Humphreys2001}
\bibinfo{author}{Humphreys, B.~R.}, \bibinfo{author}{Maccini, L.~J.} \&
  \bibinfo{author}{Schuh, S.}
\newblock \bibinfo{title}{{Input and output inventories}}.
\newblock \emph{\bibinfo{journal}{Journal of Monetary Economics}}
  \textbf{\bibinfo{volume}{47}}, \bibinfo{pages}{347--375}
  (\bibinfo{year}{2001}).

\bibitem{Iacoviello2011}
\bibinfo{author}{Iacoviello, M.}, \bibinfo{author}{Schiantarelli, F.} \&
  \bibinfo{author}{Schuh, S.}
\newblock \bibinfo{title}{{Input and output inventories in general
  equilibrium}}.
\newblock \emph{\bibinfo{journal}{International Economic Review}}
  \textbf{\bibinfo{volume}{52}}, \bibinfo{pages}{1179--1213}
  (\bibinfo{year}{2011}).

\bibitem{Khan2007}
\bibinfo{author}{Khan, A.} \& \bibinfo{author}{Thomas, J.~K.}
\newblock \bibinfo{title}{{Inventories and the Business Cycle: An Equilibrium
  Analysis of (S, s) Policies}}.
\newblock \emph{\bibinfo{journal}{The American Economic Review}}
  \textbf{\bibinfo{volume}{97}}, \bibinfo{pages}{1165--1188}
  (\bibinfo{year}{2007}).

\bibitem{Wen2011}
\bibinfo{author}{Wen, Y.}
\newblock \bibinfo{title}{{Input and Output Inventory Dynamics}}.
\newblock \emph{\bibinfo{journal}{American Economic Journal: Macroeconomics}}
  \textbf{\bibinfo{volume}{3}}, \bibinfo{pages}{181--212}
  (\bibinfo{year}{2011}).

\bibitem{Sachs2004}
\bibinfo{author}{McArthur, J.} \emph{et~al.}
\newblock \bibinfo{title}{{Ending Africa's Poverty Trap}}.
\newblock \emph{\bibinfo{journal}{Brookings Papers on Economic Activity}}
  \textbf{\bibinfo{volume}{1}}, \bibinfo{pages}{117--240}
  (\bibinfo{year}{2004}).
\newblock
  \urlprefix\url{http://www.brookings.edu/about/projects/bpea/papers/2004/africa-poverty-trap-sachs}.

\bibitem{Azariadis2005}
\bibinfo{author}{Azariadis, C.} \& \bibinfo{author}{Stachurski, J.}
\newblock \bibinfo{title}{{Poverty Traps}}.
\newblock In \bibinfo{editor}{Aghion, P.} \& \bibinfo{editor}{Durlauf, S.}
  (eds.) \emph{\bibinfo{booktitle}{Handbook of Economic Growth}},
  \bibinfo{pages}{295--384} (\bibinfo{publisher}{Elsevier},
  \bibinfo{year}{2005}).
\newblock \urlprefix\url{http://hdl.handle.net/11343/34380}.

\bibitem{Barrett2016}
\bibinfo{author}{Barrett, C.~B.}, \bibinfo{author}{Garg, T.} \&
  \bibinfo{author}{McBride, L.}
\newblock \bibinfo{title}{Well-being dynamics and poverty traps}.
\newblock \emph{\bibinfo{journal}{Annual Review of Resource Economics}}
  \textbf{\bibinfo{volume}{8}}, \bibinfo{pages}{303--327}
  (\bibinfo{year}{2016}).
\newblock
  \urlprefix\url{http://www.annualreviews.org/doi/abs/10.1146/annurev-resource-100815-095235}.

\bibitem{easterly2006reliving}
\bibinfo{author}{Easterly, W.}
\newblock \bibinfo{title}{Reliving the 1950s: the big push, poverty traps, and
  takeoffs in economic development}.
\newblock \emph{\bibinfo{journal}{Journal of Economic Growth}}
  \textbf{\bibinfo{volume}{11}}, \bibinfo{pages}{289--318}
  (\bibinfo{year}{2006}).

\bibitem{Gershenson2015}
\bibinfo{author}{Gershenson, C.} \& \bibinfo{author}{Helbing, D.}
\newblock \bibinfo{title}{When slower is faster}.
\newblock \emph{\bibinfo{journal}{Complexity}} \textbf{\bibinfo{volume}{21}},
  \bibinfo{pages}{9--15} (\bibinfo{year}{2015}).
\newblock \urlprefix\url{http://dx.doi.org/10.1002/cplx.21736}.

\bibitem{north1990institutions}
\bibinfo{author}{North, D.~C.}
\newblock \emph{\bibinfo{title}{Institutions, Institutional Change and Economic
  Performance}}.
\newblock Political Economy of Institutions and Decisions
  (\bibinfo{publisher}{Cambridge University Press}, \bibinfo{year}{1990}).
\newblock \urlprefix\url{https://books.google.com/books?id=oFnWbTqgNPYC}.

\bibitem{Acemoglu2005}
\bibinfo{author}{Acemoglu, D.}, \bibinfo{author}{Johnson, S.} \&
  \bibinfo{author}{Robinson, J.~A.}
\newblock \bibinfo{title}{{Institutions as a fundamental cause of long-run
  growth}}.
\newblock In \bibinfo{editor}{Aghion, P.} \& \bibinfo{editor}{Durlauf, S.~N.}
  (eds.) \emph{\bibinfo{booktitle}{Handbook of Economic Growth}},
  \bibinfo{pages}{386--414} (\bibinfo{publisher}{Elsevier},
  \bibinfo{year}{2005}).

\bibitem{Morduch1994}
\bibinfo{author}{Morduch, J.}
\newblock \bibinfo{title}{{Poverty and Vulnerability}}.
\newblock \emph{\bibinfo{journal}{The American Economic Review}}
  \textbf{\bibinfo{volume}{84}}, \bibinfo{pages}{221--225}
  (\bibinfo{year}{1994}).
\newblock \urlprefix\url{http://www.jstor.org/stable/2117833}.

\bibitem{PsychologyOfPoverty}
\bibinfo{author}{Haushofer, J.} \& \bibinfo{author}{Fehr, E.}
\newblock \bibinfo{title}{{On the psychology of poverty}}.
\newblock \emph{\bibinfo{journal}{Science}} \textbf{\bibinfo{volume}{344}},
  \bibinfo{pages}{862--867} (\bibinfo{year}{2014}).
\newblock \urlprefix\url{http://science.sciencemag.org/content/344/6186/862}.

\bibitem{DSouza2000toward}
\bibinfo{author}{D'Souza, D.~E.} \& \bibinfo{author}{Williams, F.~P.}
\newblock \bibinfo{title}{Toward a taxonomy of manufacturing flexibility
  dimensions}.
\newblock \emph{\bibinfo{journal}{Journal of Operations Management}}
  \textbf{\bibinfo{volume}{18}}, \bibinfo{pages}{577--593}
  (\bibinfo{year}{2000}).
\newblock
  \urlprefix\url{http://www.sciencedirect.com/science/article/pii/S027269630000036X}.

\bibitem{Upton1997process}
\bibinfo{author}{Upton, D.~M.}
\newblock \bibinfo{title}{Process range in manufacturing: An empirical study of
  flexibility}.
\newblock \emph{\bibinfo{journal}{Management Science}}
  \textbf{\bibinfo{volume}{43}}, \bibinfo{pages}{1079--1092}
  (\bibinfo{year}{1997}).
\newblock \urlprefix\url{http://dx.doi.org/10.1287/mnsc.43.8.1079}.

\bibitem{jones2005shape}
\bibinfo{author}{Jones, C.~I.}
\newblock \bibinfo{title}{The shape of production functions and the direction
  of technical change}.
\newblock \emph{\bibinfo{journal}{The Quarterly Journal of Economics}}
  \textbf{\bibinfo{volume}{120}}, \bibinfo{pages}{517--549}
  (\bibinfo{year}{2005}).

\bibitem{Krapivsky2000connectivity}
\bibinfo{author}{Krapivsky, P.~L.}, \bibinfo{author}{Redner, S.} \&
  \bibinfo{author}{Leyvraz, F.}
\newblock \bibinfo{title}{Connectivity of growing random networks}.
\newblock \emph{\bibinfo{journal}{Phys. Rev. Lett.}}
  \textbf{\bibinfo{volume}{85}}, \bibinfo{pages}{4629--4632}
  (\bibinfo{year}{2000}).
\newblock \urlprefix\url{http://link.aps.org/doi/10.1103/PhysRevLett.85.4629}.

\bibitem{Fujiwara2010}
\bibinfo{author}{Fujiwara, Y.} \& \bibinfo{author}{Aoyama, H.}
\newblock \bibinfo{title}{{Large-scale structure of a nation-wide production
  network}}.
\newblock \emph{\bibinfo{journal}{The European Physical Journal B}}
  \textbf{\bibinfo{volume}{77}}, \bibinfo{pages}{565--580}
  (\bibinfo{year}{2010}).

\end{thebibliography}

\end{document}